\newtheorem{theorem}{Theorem}
\def\arxiv{}
    \providecommand{\citet}[2][]{\ifx&#1&\textcite{#2}\else\textcite[#1]{#2}\fi}
    \renewcommand{\cite}[1]{\parencite{#1}}
    \providecommand{\citet}[2][]{\citeauthor{#2}~\ifx&#1&\shortcite{#2}\else\shortcite[#1]{#2}\fi}
\title{Equilibria in Two-Stage Facility Location with Atomic Clients\ifdefined\arxiv\else\footnote{A full version is available at \url{http://arxiv.org/abs/2403.03114}.}\fi}
\author{
Simon Krogmann$^1$
\and
Pascal Lenzner$^1$\and
Alexander Skopalik$^2$\and
Marc Uetz$^2$\And
Marnix C. Vos$^2$\\
\affiliations
$^1$Hasso Plattner Institute, University of Potsdam\\
$^2$Mathematics of Operations Research, University of Twente\\
\emails
\{simon.krogmann, pascal.lenzner\}@hpi.de,
\{a.skopalik, m.uetz\}@utwente.nl,
m.c.vos@student.utwente.nl
}
\newtheorem{observation}{Observation}
\newtheorem{definition}{Definition}
\newtheorem{corollary}{Corollary}
\newtheorem{lemma}{Lemma}
\crefname{step}{step}{steps}
\newcommand{\flg}{\texorpdfstring{$2$}{2}-FLG}
\newcommand{\s}{\mathbf{s}}
\newcommand{\lsort}{\ell_\mathrm{sort}}
\newcommand{\lpi}{\ell^\pi}
\newcommand{\C}{\mathcal{C}}
\newcommand{\MNS}{\mathrm{MNS}}
\newcommand{\oneProfile}{client profile}
\newcommand{\fullProfile}{full client profile}
\newcommand{\oneEq}{client equilibrium}
\newcommand{\oneEqa}{client equilibria}
\newcommand{\fullEq}{full client equilibrium}
\newcommand{\fullEqa}{full client equilibria}
\newcommand{\suppmaterial}{appendix}
\newcommand{\para}[1]{\paragraph{#1}}
\newcommand{\suppmaterial}{full version~\cite{arxiv}}
\newcommand{\para}[1]{\textbf{#1}}
\newcommand*{\trheight}{1.03923}
\newcommand*{\trlength}{1.2}
\newcommand*{\noderadius}{6.5pt}
\tikzset{
node distance={\trlength cm}, vert/.style = {draw, circle, inner sep = 0 pt, minimum size = 2 * \noderadius}, every path/.style = {-Latex}, every label/.append style={rectangle, font = {\footnotesize}}, fac/.style = {circle,fill,inner sep=1.5pt}, every node/.style = {font = {\footnotesize}}, every edge quotes/.style = {auto, sloped, font = {\footnotesize}, inner sep = 0.5pt}
}
\definecolor{cred}{HTML}{D81B60}
\definecolor{cblue}{HTML}{1E88E5}
\definecolor{cyellow}{HTML}{D09C00}
\definecolor{cgreen}{HTML}{5B8600}
\begin{document}

\maketitle

\begin{abstract}
We consider competitive facility location as a two-stage multi-agent system with two types of clients.
For a given host graph with weighted clients on the vertices, first facility agents strategically select vertices for opening their facilities.
Then, the clients strategically select which of the opened facilities in their neighborhood to patronize.
Facilities want to attract as much client weight as possible, clients want to minimize congestion on the chosen facility.

All recently studied versions of this model assume that clients can split their weight strategically.
We consider clients with unsplittable weights but allow mixed strategies.
So clients may randomize over which facility to patronize.
Besides modeling a natural client behavior, this subtle change yields drastic changes, e.g., for a given facility placement, qualitatively different client equilibria are possible.

As our main result, we show that pure subgame perfect equilibria always exist if all client weights are identical.
For this, we use a novel potential function argument, employing a hierarchical classification of the clients and sophisticated rounding in each step.
In contrast, for non-identical clients, we show that deciding the existence of even approximately stable states is computationally intractable.
On the positive side, we give a tight bound of~$2$ on the price of anarchy which implies high social welfare of equilibria, if they exist.
\end{abstract}

\section{Introduction}
In classical facility location~\cite{cornuejols1983uncapicitated,korte2011combinatorial}, a central authority places facilities in some underlying space to serve a set of clients optimally.
While this might be realistic for applications like the construction of public hospitals, in many other domains facilities are placed by selfish agents competing for clients' attention.
For example, supermarkets, pubs, or fast food restaurants aim to maximize profits by trying to attract nearby clients.

Beginning with the seminal model by \citet{hotelling} and \citet{downs} in which facilities compete for clients on a line and clients always patronize their closest facility, many variants of competitive facility location models have been investigated, see~\cite{eiselt1993competitive,revelle2005location,brenner2010location} for an overview.
However, a notable feature of most competitive facility location models is that clients do not influence each other, i.e., their behavior only depends on the location of the facilities but not on the behavior of the other clients.
Thus, facility agents face strategic decisions but the clients do not.

In contrast to these one-sided variants, recently two-sided versions featuring a client subgame have been introduced~\cite{ijcai-21,aaai-23}, making it a sequential two-stage game.
In the first stage, the facility agents simultaneously select a location for their facilities, while in the second stage, the clients simultaneously select which facilities to patronize.
In these models, the clients do interact with each other, because the client weight at a facility influences the waiting time
%more clients increase the waiting time at a facility
which is the cost that the clients aim to minimize.
%Clients then aim to visit facilities within a given range that have minimal congestion, so as to minimize their own waiting times.
In the previously studied models, clients are non-atomic, so they may distribute their weight among multiple facilities.
%Modeling, e.g., shipping companies, that select ports to deliver their goods, they
The induced client subgame is a non-atomic congestion game with an essentially unique client Nash equilibrium for any given facility placement.
This yields a two-stage facility location game where the facilities can perfectly predict the behavior of the clients.
%, since they only have to compute the (unique) client Nash equilibrium.
As a consequence, subgame perfect equilibria, the suitable solution concept for sequential games, are comparatively easy to reason about.

We take this line of work to the next level by considering \emph{atomic clients}, which cannot split their weights among multiple facilities.
However, clients are allowed to play mixed strategies, meaning that they can randomize over which facility to patronize.
Given a facility placement, the induced client subgame is then still a well-studied game, namely a singleton congestion game~\cite{Rosenthal1973,fotakis-congestion}.
Arguably, this setting is more realistic in many scenarios, like supermarket shopping or restaurant visits.
The reason is that clients, conditioned on selecting one specific facility, contribute with their full weight to the congestion of the chosen facility.
For example, a mixed strategy of an atomic client could represent selecting different facilities on different days.

%because an atomic client always experiences her full weight as latency, even when playing a mixed strategy.

Besides capturing a wide range of realistic facility location settings, a consequence of introducing atomic clients is the existence of multiple, potentially different Nash equilibria in the client subgame for a given facility placement.
These different equilibria may also yield different congestion at facilities, so the client behavior is harder to predict for the facility agents.
As a consequence, the two-stage game considered in this paper requires more involved techniques for reasoning about subgame perfect equilibria and their existence.

We develop such techniques to analyze this arguably more complex two-stage competitive facility location setting and we believe that our approach will prove versatile for understanding other multi-stage models as well.
One key technique is a novel potential argument that exploits the two-staged nature of our model and the multiplicity of equilibria in the second stage.
In particular, we use a hierarchical classification of the clients and sophisticated rounding in each step.
This ensures that for any improving strategy change by a facility, we can carefully select a new suitable client equilibrium to sustain this improvement in the potential function value.

\subsection{Model and Preliminaries}

\para{The Basics.} We consider the \emph{two-stage facility location game with atomic clients}, \emph{atomic \flg{}} for short, where a set of $k$ facility agents $F$ and a set of $n$ client agents $V$ interact on a given vertex-weighted directed host graph $H = (V,E,w)$, with weight function $w: V \to \mathbb{Q}^+$ that assigns every vertex a positive rational weight.
In \Cref{sec:spe-unweighted}, we consider all clients $v$ to be unweighted, i.e., $w(v)=1$.
We let the total weight of the vertices in $X \subseteq V$ be $w(X) := \sum_{v\in X} w(v)$.

% TODO journal: maybe make another figure that stands on its own and does not compare to the non-atomic model

The vertices of $H$ correspond to the client agents but at the same time, they also serve as possible locations for the facility agents.
The weight of a vertex can be understood as the purchasing power or expected revenue of the corresponding client agent or location.
Every facility agent $f\in F$ can choose to locate her facility at a vertex $v\in V$, however for facility agent $f$, the feasible vertices may be restricted to $U(f) \subseteq V$.
By $U: F \to 2^V$ we denote the function that maps facility agents to feasible vertices.
Thus, an instance of the {atomic \flg{}} is specified by the triple $(H,U,k)$.
If $U(f) = V$ for every facility agent $f$ then we say that the instance is \emph{unrestricted} and we will omit $U$ in this case.

\para{The Sequential Game.} The atomic \flg{} consists of two stages.
First, each facility agent $f$ selects a location $s_f \in U(f) \subseteq V$ for opening a facility on host graph $H$.
Let $\s = (s_1,\dots,s_k)$ denote the vector of chosen facility locations for some ordering of the facility agents.
Note that $s_f=s_g$ is possible for different facilities $f,g\in F$.
We say that $\s$ is the \emph{facility placement profile (FPP)}.
Moreover, let $S\in V^k$ denote the set of all possible FPPs.
In the second stage, given a FPP,
each client agent $v\in V$ strategically decides which facility to patronize.
A client agent $v$ can only patronize a facility $f$ that is located in her neighborhood $N(v) = \{v\} \cup \{u \mid (v,u) \in E\}$, so if $s_f\in N(v)$.
Let $N_\s(v) = \{f\in F \mid s_f \in N(v)\}$ denote the set of facilities that client $v$ can potentially patronize, for a given FPP $\s$.
Conversely, we define the \emph{attraction range} of a facility $f$ to be $A_\s(f) = \{v \in V \mid f \in N_\s(v)\}$.
We overload this function for a set of facilities $X \subseteq F$, i.e., $A_\s(X) = \bigcup_{f\in X} A_\s(f)$, and $N_\s(X)=\bigcup_{v\in X} N_\s(v)$, for a set of clients $X \subseteq V$.

A main distinguishing feature of this work is that we consider \emph{atomic} clients.
That means that a client $v\in V$, when patronizing a facility $f\in N_\s(v)$, uses her full purchasing power $w(v)$ exclusively on facility $f$.
However, client agents can play mixed strategies, i.e., client agents are allowed to randomize which facilities to visit.

\para{Feasible Client Strategies.}
For a given FPP $\s$, we will denote by $\sigma(\s)$ a \emph{\oneProfile}.
The strategy of a client $v\in V$ is $\sigma(\s)_v$, which is a probability distribution on $N_\s(v)$,
the set of facilities that client $v$ can patronize.
By $\sigma(\s)_{v,f} \in [0,1]$ we denote the probability that client $v$ patronizes facility $f \in N_\s(v)$.
For given $\s$, no facility might be available for some clients $v$, i.e., $N_\s(v) = \varnothing$.
In this case, such clients do not patronize any facility. We therefore define a strategy profile, for given FPP~$\s$, as a function $\sigma(\s): V \to [0,1]^k$ that has to fulfill the feasibility conditions:
For each client $v\in V$,
\begin{itemize}\itemsep0pt
\item[(i)] $\sigma(\s)_{v,f} = 0$, for all $f\notin N_\s(v)$, and
\item [(ii)]
if $N_\s(v)\neq \varnothing$ then $\sum_{f \in N_\s(v)} \sigma(\s)_{v,f} = 1$.
\end{itemize}
Since we have a sequential game with two stages, the strategy of the client agents must be specified for all possible FPPs~$\s$.
Thus, the \emph{\fullProfile{}} is defined by a function $\sigma: S \times V \to [0,1]^k$.
Such a \fullProfile{} $\sigma$ is feasible if $\sigma(\s)$ is feasible for every $\s \in S$.
Moreover, let $\Sigma$ denote the set of all feasible \fullProfile{}s and let $\Sigma_{\s}$ denote the set of all feasible \oneProfile{}s for a given FPP~$\s$.

\para{Objectives.}
An \emph{outcome} of the {atomic \flg{}} on a given host graph $H$ is defined by the tuple $(\s,\sigma)$, where $\s \in S$ and $\sigma \in \Sigma$.
Any outcome entails \emph{expected facility loads}, where the expected facility load of facility $f$ in outcome $(\s,\sigma)$ is
\[
\ell_f(\s,\sigma) = \sum\nolimits_{v\in V}\sigma(\s)_{v,f} \cdot w(v)\text.
\]
Thus, $\ell_f(\s,\sigma)$ captures the expected demand that is realized at facility $f$ under client behavior $\sigma$.
Naturally, we assume that facility agents seek a facility placement to maximize this expected demand.
From the viewpoint of the clients, however, the expected facility loads are seen as congestion, i.e., as waiting time at the given facility.
We assume that clients want to minimize their \emph{expected waiting time} $L_v(\s,\sigma)$, being %defined via the expected facility loads as follows:
\begin{align*}
%L_v(\s,\sigma) &= \sum_{f\in N_\s(v)}\sigma(\s)_{v,f} \cdot \ell_f(\s,\sigma)\\
L_v(\s,\sigma)
%&= \sum_{f\in N_\s(v)}\sigma(\s)_{v,f} \cdot E[\ell_f(\s,\sigma)\mid v \text{ patronizes } f]\\
& = w(v) + \sum\nolimits_{f\in N_\s(v)}\sigma(\s)_{v,f}\cdot \ell_{-v,f}(\s,\sigma)\text,
\end{align*}

where $\ell_{-v,f}(\s,\sigma) = \sum_{u\neq v}\sigma(\s)_{u,f} \cdot w(u)$ is the expected load of facility $f$ contributed by all client agents other than $v$.
We also call $\ell_{-v,f}(\s,\sigma)$ the \emph{$v$-excluded load of facility $f$}.

\begin{figure}[t]
    \centering
    \begin{tikzpicture}
        \coordinate(a) at (6cm +\noderadius,0);
        \coordinate(b) at (6cm + 1.8*\trheight cm +\noderadius,0);
        
        \fill[cblue!40] (a) arc (0:180:\noderadius) -- cycle;
        \fill[cblue!40] (b) arc (0:180:\noderadius) -- cycle;
        \fill[cred!40] (a) arc (0:-180:\noderadius) -- cycle;
        \fill[cred!40] (b) arc (0:-180:\noderadius) -- cycle;
        \draw[ultra thin, -] (a) -- ++(-2*\noderadius,0);
        \draw[ultra thin, -] (b) -- ++(-2*\noderadius,0);
    
        \foreach \i / \name / \filla / \fillb in {0/a/cred!40/cblue!40,3/b/cblue!40/cred!40,6/none/none} {
            \node (\name_c1) [vert,fill=\filla] [label={[inner ysep=0pt, yshift=3pt]above:$3$}] at (\i+0,0) {};
            \node (\name_c2) [vert,fill=\fillb] [label={[inner ysep=0pt, yshift=3pt]above:$1$}] at (\i+1.8*\trheight,0) {};
            \node (\name_f1) [vert] [label={[inner ysep=0pt, xshift=-2pt, yshift=5pt]right:$0$}] at (\i+0.9*\trheight,0.26*\trheight) {};
            \node (\name_f2) [vert] [label={[inner ysep=0pt, xshift=-2pt, yshift=-5pt]right:$0$}] at (\i+0.9*\trheight,-0.26*\trheight) {};
            \node[fac,cblue] at (\name_f1) {};
            \node[fac,cred] at (\name_f2) {};
            \draw (\name_c1) edge (\name_f1) edge (\name_f2);
            \draw (\name_c2) edge (\name_f1) edge (\name_f2);
        }
    \end{tikzpicture}
    \caption{
    Instance with four atomic clients ($=$ vertices) with weights $3,1,0,0$.
    Depicted are the three \oneEqa{} for one given FPP where the two facilities ($=$ colored bullets) each select a $0$-weight node as location.
    The (mixed) client strategies of the clients with weights $3$ and $1$ are shown as pie charts within the client nodes, e.g., on the right, both clients select each facility with probability $\frac12$.
    %On the left, the red client cannot improve through a mixed strategy, as she will experience her full weight on either facility.
    }
    \label{fig:example}
\end{figure}
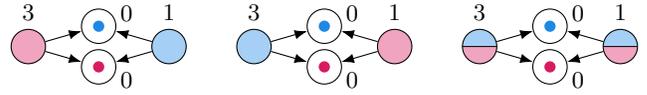

\para{Equilibria.}
For a given FPP $\s$, we say that $\sigma(\s)$ is a \oneEq{}, if no client $v\in V$ has an alternative client strategy $\sigma'(\s)_v$ such that $L_v(\s,(\sigma'(\s)_v, \sigma(\s)_{-v})) < L_v(\s,\sigma(\s))$, where, as usual, $\sigma(\s)_{-v}$ denotes the strategies of all clients except~$v$.
See \Cref{fig:example} for examples.

A state $(\s, \sigma)$ is a subgame perfect equilibrium (SPE) if
\begin{itemize}\itemsep0pt
\item[(1)] $\sigma(\s')$ is a \oneEq{} for \emph{each} FPP $\s'$ and
\item[(2)]for no facility $f$ there exists a location $s'_f$ such that $\ell_f((s'_f, \s_{-f}), \sigma) > \ell_f(\s, \sigma)$.
\end{itemize}
%(1) $\sigma(\s')$ is a \oneEq{} for \emph{each} FPP $\s'$ and (2) for no facility $f$ there exists a location $s'_f$ such that $\ell_f((s'_f, \s_{-f}), \sigma) > \ell_f(\s, \sigma)$.
\noindent For an $\alpha$-approximate SPE, we replace (2) with the following relaxation:
For no facility $f$ there exists a location $s'_f$ such that $\ell_f((s'_f, \s_{-f}), \sigma) > \alpha\cdot\ell_f(\s, \sigma)$, for some $\alpha\ge 1$.

\para{Social Welfare.}
We measure the social welfare of a facility placement profile $\s$ using the \emph{weighted participation rate} $w(\s)=\sum_{v: N_\s(v) \neq \varnothing}{w(v)}$, which is the total client weight covered by at least one facility.
%Note that the weighted participation rate only depends on the facility placement and not on the client strategies.
With this, we define the price of anarchy (PoA)~\cite{poa} as
\[
\text{PoA} =\max_{H,U,k}
\frac{
w(\text{OPT}(H,U,k))}{w(\text{worstSPE}(H,U,k))}\text,
\]
where $\text{OPT}(H,U,k)$ is the FPP with the highest weighted participation rate and $\text{worstSPE}(H,U,k)$ the FPP $\s$ for the SPE $(\s, \sigma)$ with the lowest weighted participation rate.
\ifdefined\arxiv
We also define the price of stability (PoS)~\cite{pos} as
\[
\text{PoS} =\max_{H,U,k}
\frac{
w(\text{OPT}(H,U,k))}{w(\text{bestSPE}(H,U,k))}\text,
\]
where $\text{bestSPE}(H,U,k)$ is the FPP $\s$ for the SPE $(\s, \sigma)$ with the highest weighted participation rate.
\fi
Observe that the PoA \ifdefined\arxiv and the PoS are \else is \fi only well-defined if a SPE exists with a weighted participation rate greater than $0$.

\subsection{Related Work}
For one-sided competitive facility location, the Hotelling-Downs model was also analyzed on graphs~\cite{palvolgyi2011hotelling,Fournier19,FournierS19}, where the clients reside on the edges.
A discrete version of this is the Voronoi game~\cite{vornoi04,voronoi}, with facilities and clients placed on the nodes of an underlying network.
Also, graph-based models with limited attraction range of facilities have been studied~\cite{feldman-hotelling,ShenW17,CohenP19}, where clients patronize facilities only within a certain distance and split their weight equally among them.
Other non-cooperative variants have been investigated, e.g., \cite{Vetta02,CardinalH10,SabanM12}.

To the best of our knowledge, the first model where the clients also face strategic decisions was proposed by \citet{KOHLBERG}.
In this model, the clients' cost function is a linear combination of distance and waiting time.
Kohlberg shows the existence of a SPE for two facilities.
Later, \citet{Peters2018} extended this to four and six facilities, if the cost function is heavily tilted towards waiting time, and \citet{FeldottoLMS19} investigate approximate SPE.

Closest to our work are the models by~\citet{ijcai-21} and \citet{aaai-23}.
They consider non-atomic clients that split their weights among nearby facilities to either minimize (1) their maximum waiting time or (2) their total waiting time.
While for (1) SPE exist, for (2) deciding SPE existence is NP-hard but approximate SPE can be computed efficiently.
In contrast to both variants, we consider clients who cannot split their weight among multiple facilities.
However, clients may use randomized strategies.
This subtle difference is important since it yields different equilibria, i.e., randomizing over atomic strategies is not the same as non-atomic weight splitting, see~\Cref{fig:splittable}.
Moreover, we also allow agent-dependent restrictions for the placement of facilities, which makes our model more general.
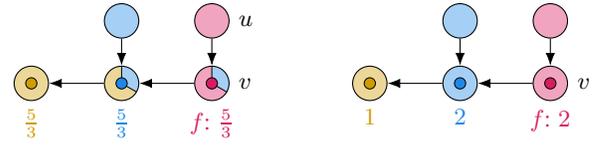
\begin{figure}
    \centering
    \begin{tikzpicture}
        \coordinate(2c) at (\trlength, 0);
        \coordinate(3c) at (2*\trlength, 0);

        % coloring of split nodes
        \fill[cyellow!40] (2c) -- ++(0,\noderadius) arc (90:330:\noderadius) -- cycle;
        \fill[cblue!40] (2c) -- ++(0,\noderadius) arc (90:-30:\noderadius) -- cycle;
        
        \fill[cred!40] (3c) -- ++(0,\noderadius) arc (90:330:\noderadius) -- cycle;
        \fill[cblue!40] (3c) -- ++(0,\noderadius) arc (90:-30:\noderadius) -- cycle;

        % dividers in split nodes
        \draw[ultra thin, -] (2c) -- ++(90:\noderadius);
        \draw[ultra thin, -] (2c) -- ++(330:\noderadius);
        \draw[ultra thin, -] (3c) -- ++(90:\noderadius);
        \draw[ultra thin, -] (3c) -- ++(330:\noderadius);

        % nodes left
        \node (1a) [vert,fill=cyellow!40, label={[cyellow, inner ysep=0pt, yshift=-3pt]below:$\frac53$}] at (0,0) {};
        \node (2a) [vert,label={[cblue, inner ysep=0pt, yshift=-3pt]below:$\frac53$}] at (2c) {};
        \node (3a) [vert,label={[cred, inner ysep=0pt, yshift=-3pt]below:{$f$: $\frac53$}}] [label=right:$v$] at (3c) {};
        \node (4a) [vert,fill=cblue!40] at (1*\trlength,0.8*\trheight){};
        \node (5a) [vert,fill=cred!40] [label=right:$u$] at (2*\trlength,0.8*\trheight){};

        % nodes right
        \node (1b) [vert,fill=cyellow!40, label={[cyellow, inner ysep=0pt, yshift=-3pt]below:$1$}] at (4.5, 0) {};
        \node (2b) [vert,fill=cblue!40, label={[cblue, inner ysep=0pt, yshift=-3pt]below:$2$}, right of=1b]{};
        \node (3b) [vert,fill=cred!40, label={[cred, inner ysep=0pt, yshift=-3pt]below:{$f$: $2$}}, right of=2b] [label=right:$v$] {};
        \node (4b) [vert,fill=cblue!40] at (4.5 + 1*\trlength,0.8*\trheight){};
        \node (5b) [vert,fill=cred!40] at (4.5 + 2*\trlength,0.8*\trheight){};

        % facilities, edges
        \foreach \name in {a,b} {
            \node[fac,cyellow, draw=black, ultra thin] at (1\name) {};
            \node[fac,cblue, draw=black, ultra thin] at (2\name) {};
            \node[fac,cred, draw=black, ultra thin] at (3\name) {};
            \draw (2\name) -- (1\name);
            \draw (3\name) -- (2\name);
            \draw (4\name) -- (2\name);
            \draw (5\name) -- (3\name);
        }
    \end{tikzpicture}
    \caption{Instance with unit weight clients ($=$ vertices) and FPP~$\s$ for three facilities ($=$ colored bullets).
    Left: \oneEq{} with \emph{non-atomic} clients.
    The facilities receive equal load, e.g., red facility $f$ receives the full demand of $u$, and $\frac23$ of the demand of $v$.
    This is also a mixed \oneProfile{} for \emph{atomic} clients, but \emph{not} an equilibrium as client $v$ can improve by increasing her probability of patronizing~$f$.
    Right: A possible \oneEq{} for atomic clients.}
    \label{fig:splittable}
\end{figure}

%All so far mentioned models consider clients that want to minimize their distance to a facility and/or their waiting time.
However, also clients with different types that aim for a certain type-distribution at their selected facility have been considered~\cite{HarderKLS23} and this is closely related to Hedonic Diversity Games~\cite{BredereckEI19}.

Besides the analysis of competitive facility location, many recent works~\cite{KanellopoulosVZ22,Walsh22,DeligkasFV22,MaX0K23} consider a mechanism design point of view.
There, clients submit their locations to a mechanism that selects positions for opening facilities.
See~\cite{mech-design-survey} for an overview.

%Our main object of study are subgame perfect equilibria. This is a standard solution concept in game theory~\cite{peters-textbook-4}. For games with perfect information, subgame perfect equilibria can be computed efficiently via backward induction. However, since in our model all facilities move simultaneously in the first stage%and since different client equilibria exist
%, we do not have a setting with perfect information.

\subsection{Our Contribution}
To the best of our knowledge, the two-stage facility location game with atomic, i.e., unsplittable clients has not been considered before.
It is more complex than the earlier considered two-stage facility location games with splittable clients, as client equilibria are no longer unique, and may even be qualitatively different with respect to their facility loads.
We first prove the existence of subgame perfect equilibria for clients with unit weights, by using a modified best response dynamic.
For clients with arbitrary weights, we show that the existence of subgame perfect equilibria is not guaranteed, and that it is even NP-complete to decide if a $\phi$-approximate subgame perfect equilibrium exists, for the golden ratio $\phi\approx 1.618$.
Using the utilitarian social welfare of the facilities, which equals the weighted participation rate, we also prove a tight bound of $2$ on the price of anarchy, given that an equilibrium exists.

\section{Client Equilibria}
\label{sec:client-eq}
Given a FPP~$\s$, the client subgame is an example of the well-studied \emph{(weighted) singleton congestion game}.
The unweighted version has the finite improvement property and thus an improving response dynamic always converges to a pure Nash equilibrium~\cite{Rosenthal1973}.
For the weighted version, \citet{fotakis-congestion} show that pure Nash equilibria always exist and can be computed efficiently.

We first observe that in a (mixed) \oneEq{} a client $v$ only patronizes certain facilities in her shopping range.

\begin{observation}[Client Patronage Set] \label{obs:excluded-load}
    A \oneProfile{} $\sigma(\s) \in \Sigma_\s$ is a \oneEq{} for a given FPP~$\s$ if and only if for each $v \in V$, the set of her patronized facilities $\{f \in F \mid \sigma(\s)_{v,f} > 0\}$ contains only facilities with minimal $v$-excluded load within her shopping range $N_\s(v)$.
\end{observation}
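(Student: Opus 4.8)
The plan is to decouple the \oneEq{} condition into an independent best-response condition for each client, using the key fact that the $v$-excluded load $\ell_{-v,f}(\s,\sigma)$ does not depend on $v$'s own strategy $\sigma(\s)_v$. First I would fix an arbitrary client $v$ with $N_\s(v)\neq\varnothing$ together with the strategies $\sigma(\s)_{-v}$ of all other clients, and observe that for any feasible alternative strategy $\sigma'(\s)_v$ of $v$,
\[
L_v(\s,(\sigma'(\s)_v,\sigma(\s)_{-v})) = w(v) + \sum_{f\in N_\s(v)}\sigma'(\s)_{v,f}\cdot\ell_{-v,f}(\s,\sigma),
\]
where the coefficients $\ell_{-v,f}(\s,\sigma)$ are constants independent of $\sigma'(\s)_v$. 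Minimizing this affine function over the probability simplex on $N_\s(v)$ yields value $w(v)+\min_{g\in N_\s(v)}\ell_{-v,g}(\s,\sigma)$, attained exactly by the distributions supported on the set $M_v$ of facilities in $N_\s(v)$ of minimal $v$-excluded load.

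From this the two directions would follow directly. For ``$\Leftarrow$'', if every patronage set $\{f\mid\sigma(\s)_{v,f}>0\}$ is contained in the corresponding $M_v$, then $L_v(\s,\sigma)=w(v)+\min_{g\in N_\s(v)}\ell_{-v,g}(\s,\sigma)$ is already the minimum value $v$ can achieve against $\sigma(\s)_{-v}$, so no client can improve and $\sigma(\s)$ is a \oneEq{}. For ``$\Rightarrow$'', if $\sigma(\s)$ is a \oneEq{} but some client $v$ places positive probability on a facility $f\notin M_v$, pick $g\in N_\s(v)$ with $\ell_{-v,g}(\s,\sigma)<\ell_{-v,f}(\s,\sigma)$ and let $v$ move all probability currently on $N_\s(v)\setminus M_v$ onto $g$; this deviation is feasible, leaves every $v$-excluded load unchanged, and strictly lowers $\sum_f\sigma(\s)_{v,f}\ell_{-v,f}(\s,\sigma)$ and hence $L_v$, contradicting stability.

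I do not expect a genuine obstacle: this is the familiar support/indifference characterization of mixed equilibria specialized to singleton congestion games, and the only thing to handle carefully is the degenerate case $N_\s(v)=\varnothing$, where client $v$ patronizes nothing, $L_v=w(v)$ regardless, and both sides of the stated equivalence hold vacuously (the empty patronage set trivially consists of minimal-load facilities only, and $v$ has no feasible deviation). A minor point is to double-check that the deviation used in the ``$\Rightarrow$'' direction meets feasibility condition~(i), which it does since $g\in N_\s(v)$.
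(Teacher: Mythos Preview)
Your proposal is correct and complete. The paper itself states this result as an observation without proof, relying implicitly on exactly the standard support/indifference characterization you spell out; your argument is the natural one and there is nothing to compare against.
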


Thus, we may check in polynomial time if a \oneProfile{} is a \oneEq{}.
Client equilibria are however not necessarily unique and might induce different facility loads.

\begin{observation}[Qualitatively Different Client Equilibria] \label{thm:client-eq-not-unique}
    For a given facility placement profile, the number of client equilibria is not necessarily bounded and different client equilibria may induce different facility loads.
\end{observation}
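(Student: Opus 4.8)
The plan is to prove both parts of the statement by exhibiting explicit small instances together with a single FPP~$\s$, using \Cref{obs:excluded-load} to certify exactly which \oneProfile{}s are \oneEqa{}.

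For the unbounded number of \oneEqa{}, I would use the three-vertex instance with vertices $v,u_1,u_2$ of weight~$1$, directed edges $(v,u_1)$ and $(v,u_2)$, and the FPP that places facility~$f$ on $u_1$ and facility~$g$ on $u_2$. Then $N_\s(v)=\{f,g\}$, whereas $N_\s(u_1)=\{f\}$ and $N_\s(u_2)=\{g\}$, so $u_1$ and $u_2$ have a forced strategy and always contribute weight~$1$ to $f$ and to $g$, respectively. Hence the $v$-excluded load of both facilities equals~$1$ in every \oneProfile{}, so by \Cref{obs:excluded-load} every mixed strategy $\sigma(\s)_{v,f}=p\in[0,1]$ of~$v$ gives a \oneEq{}. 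This is a continuum of \oneEqa{} for a fixed finite instance, and the induced load $\ell_f(\s,\sigma)=1+p$ ranges over $[1,2]$, which simultaneously shows that their number need not be bounded and that different \oneEqa{} induce different facility loads.

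For the \emph{qualitative} difference emphasized in the statement, I would additionally invoke the instance of \Cref{fig:example}: two atomic clients $v_1,v_2$ of weights $3$ and~$1$, each adjacent to two weight-$0$ vertices hosting $f$ and~$g$. Writing $p=\sigma(\s)_{v_1,f}$ and $q=\sigma(\s)_{v_2,f}$, the $v_1$-excluded loads of $f$ and~$g$ are $q$ and $1-q$, while the $v_2$-excluded loads are $3p$ and $3(1-p)$; a short case distinction (mixed versus pure for each of $v_1,v_2$) via \Cref{obs:excluded-load} shows that the only \oneEqa{} are $(p,q)\in\{(\tfrac12,\tfrac12),(1,0),(0,1)\}$, with facility loads $(\ell_f,\ell_g)\in\{(2,2),(3,1),(1,3)\}$ — one perfectly balanced and two highly skewed.

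I do not expect any genuine obstacle, since the claim is existential and the constructions are tiny. The only points needing care are the neighborhood/attraction-range bookkeeping ensuring the auxiliary clients have a forced strategy, and the routine mixed-versus-pure case analysis in the second instance confirming that no further \oneEqa{} exist; both follow directly from \Cref{obs:excluded-load}.
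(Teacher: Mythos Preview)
Your proof is correct and follows essentially the same approach as the paper: exhibit a tiny instance in which a single client can freely mix between two facilities with equal excluded loads, yielding a continuum of \oneEqa{} with varying loads. The paper's construction is even more minimal---a single vertex $v$ with both facilities colocated on it, so that $\sigma(\s)_v=(\gamma,1-\gamma)$ is a \oneEq{} for every $\gamma\in[0,1]$---whereas you use three vertices to separate the facilities; your additional appeal to \Cref{fig:example} is fine but not needed.
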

\begin{proof}
    Consider an unweighted instance with a single node $v$ and $k=2$ facility agents placed on $v$.
    For every $\gamma \in [0,1]$, the \oneProfile{} $\sigma(\s)$ with $\sigma(\s)_v= (\gamma, 1-\gamma)$ is a \oneEq{}.
    Furthermore, the expected facility loads $\gamma$ and $1-\gamma$ differ for every value of $\gamma$.
\end{proof}

\section{Subgame Perfect Equilibria}
\label{sec:spe-unweighted}

We now prove the existence of SPE for the \emph{unweighted} atomic \flg{} via slightly modified improving response dynamics for the facility agents.
We use the sorted vector of facility loads as a potential function, but after each facility move, we also select a new \fullEq{}.
This is necessary because there exist improving facility moves that decrease the value of the potential function.
We prevent these moves by changing to a \fullEq{} that consistently favors the facilities that already receive more load.
We show that this change does not affect the value of the potential function.

\subsection{Rounded Client Profiles}

To determine suitable candidates for strategy profiles that yield SPE, we first partition the set of clients and the set of facilities into \emph{classes}.
To that end, we implicitly employ a process that repeatedly removes those facilities that have the least (average) number of clients in their range.
To achieve this, we define the \emph{minimum neighborhood set}\footnote{We note the difference to the definition of \citet{ijcai-21}, as we ensure uniqueness of the MNS.
Furthermore, the explicit definition of $F_i$ and $V_i$ allows for easier recursion.} as the set of facilities that has the smallest average number of clients in its attraction range for a given subset of clients and facilities.
\begin{definition}[Minimum Neighborhood Set (MNS)]
    For a facility placement profile $\s$, a subset $F^* \subseteq F$ and another subset $V^* \subseteq V$, we denote the \emph{minimum neighborhood set} $\MNS_\s(F^*, V^*)$ as the largest subset of $F^*$ with
    \[
        \MNS_\s(F^*, V^*) \in \arg\min_{T \subseteq F^*} \frac{w(A_\s(T) \cap V^*)}{|T|}\text.
    \]
\end{definition}
We now repeatedly remove a MNS and its associated set of clients from the game and assign both sets to a class.

\begin{definition}[Class Set]
For $i\ge 1$, the class $C_i=(F_i,V_i)$ is inductively defined by

\noindent\resizebox{\linewidth}{!}{
\( \displaystyle
    F_i=\MNS_\s\biggl(F\setminus \bigcup_{j=1}^{i-1} F_j, V\setminus \bigcup_{j=1}^{i-1} V_j\biggr)
    \ \text{and}\ 
    V_i = A_\s(F_i) \setminus \bigcup_{j=1}^{i-1} V_{j}\text.
\)
}
For class $C_i$, denote its average load $\ell(C_i):=\frac{w(V_i)}{|F_i|} $.
For a FPP~$\s$, we call $\C:=\{C_1,C_2,\ldots\}$ the \emph{class set} of $\s$.
For a client $v$, let $C(v)$ be the class of $v$, i.e., $C(v)= C_i$ with $v \in C_i$.
Similarly, $C(f)$ is the class containing facility~$f$.
\end{definition}

An example is shown in \Cref{fig:class-set}.
We note that due to minimum neighborhood sets being of maximal cardinality, the loads of classes are pairwise different and increasing.

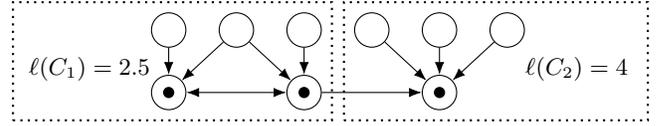
\begin{figure}
    \centering
    \begin{tikzpicture}
        \node[vert] (f1) at (0*1.5*\trlength, 0*\trheight) {};
        \node[vert] (f2) at (1*1.5*\trlength, 0*\trheight) {};
        \node[vert] (f3) at (2*1.5*\trlength, 0*\trheight) {};
        \node[vert] (v1) at (0*1.5*\trlength, 0.8*\trheight) {};
        \node[vert] (v2) at (0.5*1.5*\trlength, 0.8*\trheight) {};
        \node[vert] (v3) at (1*1.5*\trlength, 0.8*\trheight) {};
        \node[vert] (v4) at (1.5*1.5*\trlength, 0.8*\trheight) {};
        \node[vert] (v5) at (2*1.5*\trlength, 0.8*\trheight) {};
        \node[vert] (v6) at (2.5*1.5*\trlength, 0.8*\trheight) {};
        
        \node[fac] at (f1) {};
        \node[fac] at (f2) {};
        \node[fac] at (f3) {};

        \draw[thick,dotted] ($(v3) + (0.37, 0.37)$) rectangle ($(f1) - (0.37 + 1.7, 0.37)$);
        \draw[thick,dotted] ($(v6) + (0.37 + 1.5, 0.37)$) rectangle ($(v4) - (0.37, 0.8*\trheight + 0.37)$);
        \node[draw=none, anchor=east] at ($(v1) - (0.1*\trlength, 0.5*\trheight)$) {$\ell(C_1)=2.5$};
        \node[draw=none, anchor=west] at ($(v6) + (0.1*\trlength, -0.5*\trheight)$) {$\ell(C_2)=4$};

        \draw (f1) edge[Latex-Latex] (f2);
        \draw (f2) edge (f3);
        \draw (v1) edge (f1);
        \draw (v2) edge (f1) edge (f2);
        \draw (v3) edge (f2);
        \draw (v4) edge (f3);
        \draw (v5) edge (f3);
        \draw (v6) edge (f3);
        
    \end{tikzpicture}
    \caption{In this unweighted instance, the class set contains the shown two classes for the given facility placement profile.}
    \label{fig:class-set}
\end{figure}

\begin{corollary}[Uniqueness of Class Set]
\label{cor:class-set-unique}
For an instance of the \flg{} and the FPP~$\s$, the class set is unique.
Moreover, we have $\ell(C_i) < \ell(C_{i+1})$ for all $i\ge 1$.
\end{corollary}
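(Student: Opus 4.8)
The plan is to derive both claims — uniqueness of the class set and the strict monotonicity $\ell(C_i) < \ell(C_{i+1})$ — from a single structural property of the MNS operator: among all subsets $T \subseteq F^*$ attaining the minimum ratio $w(A_\s(T)\cap V^*)/|T|$, the family of minimizers is closed under union, so there is a \emph{unique largest} one. Once this is established, each $F_i$ (and hence each $V_i$, which is determined by $F_i$ and the already-removed clients) is uniquely and deterministically defined by the recursion, so the whole class set $\C$ is unique. This is really a lemma about the function $T \mapsto w(A_\s(T)\cap V^*)$, which is submodular (it is a coverage-type function: $A_\s(T) = \bigcup_{f\in T}A_\s(f)$), combined with the fact that $|T|$ is modular.

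First I would prove the union-closedness. Let $T_1, T_2$ both attain the minimum ratio $\mu := \min_{T\subseteq F^*} w(A_\s(T)\cap V^*)/|T|$. Write $g(T) = w(A_\s(T)\cap V^*)$. Submodularity of $g$ gives $g(T_1\cup T_2) + g(T_1\cap T_2) \le g(T_1) + g(T_2) = \mu(|T_1| + |T_2|) = \mu(|T_1\cup T_2| + |T_1\cap T_2|)$. Since no subset beats the ratio $\mu$, we have $g(T_1\cup T_2) \ge \mu|T_1\cup T_2|$ and $g(T_1\cap T_2) \ge \mu|T_1\cap T_2|$ (the latter treating $g(\varnothing)/0$ as $+\infty$, so the intersection case is vacuous if $T_1\cap T_2 = \varnothing$, and otherwise genuine). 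Adding these two lower bounds and comparing with the submodularity inequality forces equality throughout; in particular $g(T_1\cup T_2) = \mu|T_1\cup T_2|$, so $T_1\cup T_2$ is again a minimizer. Hence the union of all minimizers is the unique largest minimizer, justifying that $\MNS_\s(F^*,V^*)$ is well-defined. Applying this at every step of the induction, $F_1 = \MNS_\s(F,V)$ is uniquely determined, $V_1 = A_\s(F_1)$ is then determined, and inductively each $C_i = (F_i, V_i)$ is forced; therefore $\C$ is unique. (I should also note the recursion terminates: each step removes at least one facility, since an MNS is nonempty, so after finitely many steps $F\setminus\bigcup_j F_j = \varnothing$.)

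For the strict inequality $\ell(C_i) < \ell(C_{i+1})$, fix $i$ and abbreviate $F' = F\setminus\bigcup_{j\le i} F_j$, $V' = V\setminus\bigcup_{j\le i} V_j$, and $F'' = F\setminus\bigcup_{j\le i-1}F_j$, $V'' = V\setminus\bigcup_{j\le i-1}V_j$, so that $F_i$ is the largest minimizer of the ratio on $(F'', V'')$ and $F_{i+1}$ is the largest minimizer on $(F', V')$. The key observation is that $A_\s(F_{i+1}) \cap V''$ consists only of clients not yet removed before round $i$; moreover since $V_i = A_\s(F_i)\setminus\bigcup_{j<i}V_j = A_\s(F_i)\cap V''$, and $F_{i+1}\subseteq F''\setminus F_i = F'$, the set $A_\s(F_{i+1})\cap V''$ is disjoint from... actually one must be careful: $A_\s(F_{i+1})$ may well hit clients in $V_i$. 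The clean route is this: consider the set $F_i \cup F_{i+1} \subseteq F''$ and compare its ratio on $(F'', V'')$ against $\ell(C_i) = w(V_i)/|F_i|$. We have $A_\s(F_i\cup F_{i+1})\cap V'' = V_i \cup (A_\s(F_{i+1})\cap V'')$, and $A_\s(F_{i+1})\cap V'' \supseteq A_\s(F_{i+1})\cap V' = V_{i+1}$ (up to the removed clients), so $w(A_\s(F_i\cup F_{i+1})\cap V'') \ge w(V_i) + w(V_{i+1})$ would give ratio $\ge (w(V_i)+w(V_{i+1}))/(|F_i|+|F_{i+1}|)$; but $F_i$ is a minimizer on $(F'',V'')$ with value $\ell(C_i)$, and if $\ell(C_{i+1}) \le \ell(C_i)$ this combined set would have ratio $\le \ell(C_i)$ as well, hence would also be a minimizer — contradicting that $F_i$ is the \emph{largest} minimizer (as $F_i\cup F_{i+1}\supsetneq F_i$ strictly, since $F_{i+1}\ne\varnothing$ and $F_{i+1}\cap F_i = \varnothing$). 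This forces $\ell(C_{i+1}) > \ell(C_i)$. The main obstacle is exactly the bookkeeping in this last step: one must verify the weight-additivity inequality $w(A_\s(F_i\cup F_{i+1})\cap V'') \ge w(V_i) + w(V_{i+1})$ honestly, accounting for the fact that a client counted in $V_i$ cannot also be counted in $V_{i+1}$ (they are disjoint by construction of the class sets), and that every client of $V_{i+1}$ indeed lies in $A_\s(F_{i+1})\cap V''$. I expect this to be the only genuinely delicate point; the union-closedness argument via submodularity is standard.
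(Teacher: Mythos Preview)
The paper does not actually prove this corollary; it merely asserts, just before the statement, that ``due to minimum neighborhood sets being of maximal cardinality, the loads of classes are pairwise different and increasing.'' Your proposal supplies the details the paper omits, and the union-closedness argument via submodularity of the coverage function $g(T)=w(A_\s(T)\cap V^*)$ is correct and standard; it cleanly shows the MNS is well-defined and hence each $C_i$ is uniquely determined by the recursion.

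There is, however, a genuine directional slip in your monotonicity step. You establish the \emph{lower} bound $w(A_\s(F_i\cup F_{i+1})\cap V'') \ge w(V_i) + w(V_{i+1})$, and then jump to ``this combined set would have ratio $\le \ell(C_i)$.'' But a lower bound on the numerator only gives a \emph{lower} bound on the ratio; to contradict maximality of $F_i$ among minimizers you need the ratio of $F_i\cup F_{i+1}$ on $(F'',V'')$ to be \emph{at most} $\ell(C_i)$. What you actually need---and what holds---is equality: $A_\s(F_i\cup F_{i+1})\cap V'' = V_i \cup V_{i+1}$. The $\supseteq$ inclusion is the one you already have; for $\subseteq$, take $v$ in the left-hand side: if $v\in A_\s(F_i)$ then $v\in A_\s(F_i)\cap V'' = V_i$, and otherwise $v\notin A_\s(F_i)$ forces $v\in V''\setminus V_i = V'$, whence $v\in A_\s(F_{i+1})\cap V' = V_{i+1}$. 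With this equality, the ratio of $F_i\cup F_{i+1}$ is exactly the mediant $\frac{w(V_i)+w(V_{i+1})}{|F_i|+|F_{i+1}|}$, which under the hypothesis $\ell(C_{i+1})\le\ell(C_i)$ is at most $\ell(C_i)$, and the contradiction with $F_i$ being the \emph{largest} minimizer goes through. So the fix is small, but the inequality you flag as ``the main obstacle'' is pointed in the wrong direction as written.
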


Since a client is added to a class the first time one of the facilities in her shopping range is added to a class, she is in the lowest class in her neighborhood.

\begin{corollary}[Clients in Lowest Class of Neighborhood]
\label{cor:client-min-class}
Each client $v$ is in class $C(v) = C_j$ with $j = \min\{ i \mid C_i \cap N_\s(v) \ne \varnothing \} $.
\end{corollary}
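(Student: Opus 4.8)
The plan is to unwind the inductive definition of the class set directly; no extra machinery is needed. Fix a client $v$ and assume first that $N_\s(v) \neq \varnothing$ (if $N_\s(v) = \varnothing$, then $v$ lies in no attraction range $A_\s(F_i)$, hence in no class, so there is nothing to prove). Set $j := \min\{\, i \mid F_i \cap N_\s(v) \neq \varnothing \,\}$, where $C_i \cap N_\s(v)$ is read as $F_i \cap N_\s(v)$ since $N_\s(v) \subseteq F$. The first thing I would check is that this minimum is taken over a nonempty set: because the $\MNS$ operator always returns a nonempty subset of the remaining facilities, the recursion exhausts $F$ after finitely many steps, so every facility — in particular some facility in $N_\s(v)$ — belongs to some $F_i$; hence $j$ is well defined.

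Next I would show that $v \notin V_i$ for every $i < j$. By the choice of $j$, we have $F_i \cap N_\s(v) = \varnothing$ for all such $i$, so there is no facility $f \in F_i$ with $f \in N_\s(v)$, i.e.\ $v \notin A_\s(f)$ for every $f \in F_i$, and therefore $v \notin A_\s(F_i)$, which already gives $v \notin V_i$ since $V_i \subseteq A_\s(F_i)$.

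Finally I would show $v \in V_j$. By definition of $j$ there is a facility $f \in F_j$ with $f \in N_\s(v)$, equivalently $v \in A_\s(f) \subseteq A_\s(F_j)$. Combining this with the previous paragraph yields $v \in A_\s(F_j) \setminus \bigcup_{i=1}^{j-1} V_i = V_j$, so $C(v) = C_j$ and $j = \min\{\, i \mid C_i \cap N_\s(v) \neq \varnothing \,\}$, as claimed. The only real point to articulate is the bookkeeping observation that $v \in V_i$ forces $F_i$ to meet $N_\s(v)$ — a client is only ever placed in the class of a facility in her shopping range — and once that is noted the rest is immediate from the definitions; I expect no genuine obstacle here.
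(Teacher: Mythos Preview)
Your proof is correct and follows essentially the same reasoning as the paper, which dispatches the corollary in a single sentence (``since a client is added to a class the first time one of the facilities in her shopping range is added to a class, she is in the lowest class in her neighborhood''); you have simply unwound that sentence carefully, including the well-definedness of $j$ and the edge case $N_\s(v)=\varnothing$.
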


\citet{ijcai-21} give an algorithm to compute a MNS\footnote{We show in the \suppmaterial{}, why their algorithm is applicable despite the different definitions.} which we use repeatedly to compute the class set.
\begin{restatable}[Class Set Computation]{corollary}{classsetalg}
    \label{cor:class-set-alg}
    For a FPP~$\s$, the class set and the average loads may be computed efficiently.
\end{restatable}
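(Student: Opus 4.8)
The plan is to show that the entire class set $\C = \{C_1, C_2, \ldots\}$ together with the average loads $\ell(C_i)$ can be computed in polynomial time by iterating a subroutine that computes a single minimum neighborhood set. First I would recall that, by the inductive definition of the class set, once $F_1, \ldots, F_{i-1}$ and $V_1, \ldots, V_{i-1}$ are known, computing $C_i$ amounts to one call $\MNS_\s(F \setminus \bigcup_{j<i} F_j,\, V \setminus \bigcup_{j<i} V_j)$ followed by setting $V_i = A_\s(F_i) \setminus \bigcup_{j<i} V_j$, which is a trivial set operation on the host graph. Since each class contains at least one facility, there are at most $k$ classes, so at most $k$ MNS computations suffice; the average loads $\ell(C_i) = w(V_i)/|F_i|$ are then immediate. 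The correctness of this iteration is exactly the recursive definition, so the only remaining content is that a single MNS can be computed efficiently.

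For the single MNS computation I would invoke the algorithm of \citet{ijcai-21}, as the paper already announces. The subtlety flagged in the footnote is that their definition of the minimum neighborhood set differs from ours: we additionally require $\MNS_\s(F^*, V^*)$ to be the \emph{largest} subset achieving the minimum ratio $w(A_\s(T) \cap V^*)/|T|$, which guarantees uniqueness. So the argument here is that their algorithm still applies: one would observe that the problem of minimizing $w(A_\s(T) \cap V^*)/|T|$ over nonempty $T \subseteq F^*$ is a fractional (min density / min ratio) problem that can be solved by the standard parametric-search or max-flow-based reduction they use, and that among all optimal solutions their procedure can be made to return (or be post-processed to return) the inclusion-wise maximal one --- e.g., by taking the union of all minimizers, which is itself a minimizer because if $T$ and $T'$ both attain the minimum ratio $\rho$ then $w(A_\s(T \cup T') \cap V^*) \le w(A_\s(T)\cap V^*) + w(A_\s(T')\cap V^*) = \rho(|T| + |T'|)$ while $|T \cup T'| \le |T| + |T'|$, so the ratio for $T \cup T'$ is at most $\rho$. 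This submodularity-style argument is also what justifies \Cref{cor:class-set-unique}, so it is natural to lean on it here.

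The main obstacle I expect is precisely the reconciliation with the prior algorithm under the modified definition: one must check that extracting the maximal minimizer is still polynomial (the union-of-minimizers characterization above handles this cleanly, since the maximal minimizer equals $A_\s$-closure of the union of any family of minimizers, and one can find enough minimizers via the flow structure), and that nothing about the restriction to subsets $F^* \subsetneq F$, $V^* \subsetneq V$ breaks their correctness --- but restricting the ground sets only shrinks the instance, so their algorithm runs verbatim on the induced subgraph. I would defer the detailed verification of applicability of the \citet{ijcai-21} algorithm to the \suppmaterial{}, as the paper indicates, and in the main text simply assemble the pieces: at most $k$ iterations, each performing one polynomial-time MNS computation plus cheap bookkeeping on $H$, yields the class set and the average loads in polynomial time.
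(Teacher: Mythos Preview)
Your high-level plan matches the paper's: iterate the MNS computation at most $k$ times on successively restricted instances, invoking the algorithm of \citet{ijcai-21} for each single MNS, then read off $V_i$ and $\ell(C_i)$. The substantive difference is in how you reconcile the ``largest minimizer'' requirement with their definition. The paper does not post-process: it argues directly that Algorithm~2 of \citet{ijcai-21}, run on the subgraph induced by $V^*$ with facility set $F^*$, already outputs the inclusion-maximal minimizer. The argument is a short contradiction from the algorithm's augmenting-path structure: if some facility in the true maximal MNS were omitted from the algorithm's output $R$, then the algorithm would have found an augmenting path allowing the facilities in the true MNS to receive strictly more than $w(A_\s(M)\cap V^*)$ total load, contradicting the ratio being minimal.

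Your alternative, the union-of-minimizers characterization, is the right idea but the inequality chain you wrote is wrong. From $w(A_\s(T\cup T')\cap V^*) \le \rho(|T|+|T'|)$ and $|T\cup T'| \le |T|+|T'|$ you \emph{cannot} conclude that the ratio is at most $\rho$: shrinking the denominator pushes the ratio up, not down. The correct argument needs the intersection term. Since $A_\s(T\cap T') \subseteq A_\s(T)\cap A_\s(T')$ and, when $T\cap T'\ne\varnothing$, minimality of $\rho$ gives $w(A_\s(T\cap T')\cap V^*) \ge \rho\,|T\cap T'|$, inclusion--exclusion yields
\[
w(A_\s(T\cup T')\cap V^*) \le \rho|T| + \rho|T'| - \rho|T\cap T'| = \rho\,|T\cup T'|,
\]
and combined with the lower bound $\ge \rho$ you get equality. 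Even after this fix you still owe an algorithmic step: knowing that the maximal minimizer is the union of all minimizers does not by itself tell you how to find it in polynomial time from a black box that returns \emph{some} minimizer. The paper's route---showing the cited algorithm already returns the maximal one---avoids this issue entirely.
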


We now introduce \emph{rounded} client profiles which we obtain by assigning the clients to facilities within each class $C_i$.
Note, that the definition only allows \emph{pure} client strategies.

\begin{definition}[Rounded Client Profile]
\label{def:rounded}
    For a FPP~$\s$, a \oneProfile{} $\sigma(\s)$ is called \emph{rounded} if for all $f \in F$ it holds that
    \[
        \ell_f(\s, \sigma) \in \left\{\left\lfloor \ell(C_i) \right\rfloor, \left\lceil \ell(C_i) \right\rceil \right\}
    \] and that each client is assigned to exactly one facility in her class.
    That is, for each $v \in V_i$ there is exactly one $g \in F_i$ with $\sigma(\s)_{v,g}=1$.
    We call a \fullProfile{} $\sigma$ \emph{rounded} if $\sigma(\s)$ is \emph{rounded} for each $\s \in S$.
\end{definition}

We observe that every rounded \oneProfile{} is stable.
\begin{lemma}
    A rounded \oneProfile{} is a \oneEq{}.
\end{lemma}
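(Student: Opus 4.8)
The plan is to verify the equilibrium condition from \Cref{obs:excluded-load}: a \oneProfile{} is a \oneEq{} if and only if every client $v$ only patronizes facilities of minimal $v$-excluded load within $N_\s(v)$. Fix a rounded \oneProfile{} $\sigma(\s)$ and a client $v$, say $v \in V_i$, so that by \Cref{def:rounded} there is a unique $g \in F_i$ with $\sigma(\s)_{v,g}=1$. Since $\sigma(\s)$ is pure, the $v$-excluded load of a facility $h$ equals $\ell_h(\s,\sigma)$ if $h \ne g$ and $\ell_g(\s,\sigma) - w(v) = \ell_g(\s,\sigma)-1$ if $h = g$ (recall clients are unweighted in this section). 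So it suffices to show that $\ell_g(\s,\sigma) - 1 \le \ell_h(\s,\sigma)$ for every other facility $h \in N_\s(v)$.

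The key structural fact to invoke is \Cref{cor:client-min-class}: since $v \in V_i$, every facility $h \in N_\s(v)$ lies in a class $C_m$ with $m \ge i$. First I would handle $h \in F_i$ (the same class as $g$): by roundedness, $\ell_g(\s,\sigma) \le \lceil \ell(C_i)\rceil$ and $\ell_h(\s,\sigma) \ge \lfloor \ell(C_i)\rfloor$, and since $\lceil x \rceil - \lfloor x \rfloor \le 1$ we get $\ell_g(\s,\sigma) - 1 \le \lfloor \ell(C_i)\rfloor \le \ell_h(\s,\sigma)$, as desired. Next, for $h \in F_m$ with $m > i$, I would use \Cref{cor:class-set-unique}, which gives $\ell(C_i) < \ell(C_m)$; combined with roundedness this yields $\ell_g(\s,\sigma) \le \lceil \ell(C_i)\rceil \le \lceil \ell(C_m)\rceil$ and $\ell_h(\s,\sigma) \ge \lfloor \ell(C_m)\rfloor$. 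Here I would note that because $\ell(C_i)$ and $\ell(C_m)$ are distinct rationals with $\ell(C_i) < \ell(C_m)$, in fact $\lceil \ell(C_i)\rceil \le \lceil \ell(C_m)\rceil$, and more carefully $\lfloor \ell(C_i) \rfloor < \ell(C_m)$ so $\lfloor \ell(C_i)\rfloor \le \lceil \ell(C_m)\rceil - 1 \le \lfloor \ell(C_m) \rfloor$ whenever the classes straddle an integer differently — I'd want the bound $\lceil \ell(C_i) \rceil - 1 \le \lfloor \ell(C_m)\rfloor$, which follows since $\lceil \ell(C_i)\rceil - 1 < \ell(C_i) < \ell(C_m)$ and the left side is an integer, hence $\lceil\ell(C_i)\rceil - 1 \le \lfloor \ell(C_m)\rfloor \le \ell_h(\s,\sigma)$.

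The main obstacle is bookkeeping the floor/ceiling inequalities across classes so that the single-unit slack a client creates at her own facility $g$ never lets her undercut a facility in a strictly higher class; the crucial input is that class loads are not merely non-decreasing but strictly increasing (\Cref{cor:class-set-unique}), so the ceilings are non-decreasing and a unit of slack is always absorbed. Once these inequalities are in place, every facility $v$ patronizes (namely $g$) has $v$-excluded load $\ell_g(\s,\sigma)-1$, which we have shown is at most the $v$-excluded load of any other facility in $N_\s(v)$, so $g$ is of minimal $v$-excluded load and the condition of \Cref{obs:excluded-load} holds for $v$. As $v$ was arbitrary, $\sigma(\s)$ is a \oneEq{}.
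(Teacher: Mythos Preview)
Your proof is correct and uses essentially the same ingredients as the paper's proof: the roundedness bounds on facility loads together with \Cref{cor:client-min-class} to rule out any profitable deviation. The only difference is presentational---the paper argues by contradiction (an improving move would force the target facility into a strictly lower class, contradicting \Cref{cor:client-min-class}), whereas you verify the excluded-load condition of \Cref{obs:excluded-load} directly via a case split on whether the alternative facility is in the same class or a higher one; your floor/ceiling bookkeeping is sound.
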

\begin{proof}
    Assume for the sake of contradiction that $\sigma(\s)$ is not a \oneEq{}.
    Hence, there exists some client $v$ that is assigned to facility $f$, and there is another facility $g \in N_\s(v)$ with $\ell_g(\s, \sigma) +1 < \ell_f(\s, \sigma)$.
    Let $C_i$ be the class of $v$ and $f$.
    Since $\ell_g(\s, \sigma) +1 < \ell_f(\s, \sigma)$, facility $g$ has to be in a class $C_j$ with $j < i$ and $\ell(C_j) < \ell(C_i)$.
    However, by definition of the class sets, $v$ should have been assigned to class $C_j$ and not $C_i$, which yields the contradiction.
\end{proof}

We now establish the existence of a rounding scheme that yields rounded client profiles, thereby proving their existence.
In particular, this implies that pure client equilibria exist.
\begin{theorem} [Existence of Rounded Profiles] \label{thm:rounded-eq-existence}
    For every unweighted instance of the atomic \flg{} and facility placement profile $\s$, there exists a rounded \oneProfile{} $\sigma(\s)$.
\end{theorem}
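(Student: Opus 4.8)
The plan is to build a rounded \oneProfile{} class by class. By \Cref{def:rounded} it suffices to exhibit, for each class $C_i=(F_i,V_i)$, an assignment $\pi\colon V_i\to F_i$ with $\pi(v)\in N_\s(v)$ for every $v\in V_i$ such that each $f\in F_i$ receives load $\lpi_f:=|\pi^{-1}(f)|\in\{\lfloor\ell(C_i)\rfloor,\lceil\ell(C_i)\rceil\}$; putting $\sigma(\s)_{v,\pi(v)}=1$ for all $i$ and all $v\in V_i$ then gives a rounded \oneProfile{}, because a facility in $F_i$ only ever receives clients from $V_i$. So I would fix a class and abbreviate $p:=|F_i|$ and $b:=\ell(C_i)=|V_i|/p$. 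Since $V_i\subseteq A_\s(F_i)$, every $v\in V_i$ has a neighbouring facility in $F_i$, so valid assignments exist. The only structural fact I will use is a consequence of $F_i$ being a minimum neighbourhood set: for every nonempty $T\subseteq F_i$,
\[
|A_\s(T)\cap V_i|\ \ge\ b\cdot|T|\text{.}
\]
This holds because $A_\s(F_i)\cap(V\setminus\bigcup_{j<i}V_j)=V_i$ while, for $T\subseteq F_i$, $A_\s(T)\subseteq A_\s(F_i)$ gives $A_\s(T)\cap(V\setminus\bigcup_{j<i}V_j)=A_\s(T)\cap V_i$, so the minimality in the definition of $\MNS_\s$ is exactly the displayed inequality.

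Among the finitely many valid assignments I would fix one, $\pi$, minimising the potential $\Phi(\pi):=\sum_{f\in F_i}(\lpi_f)^2$. Call a facility $g$ \emph{reachable} from $h_0$ in $\pi$ if there is an alternating path $h_0,u_1,h_1,\dots,u_t,h_t=g$ with $t\ge 1$ that uses pairwise distinct clients $u_j\in V_i$ and facilities $h_j\in F_i$ and satisfies $\pi(u_j)=h_{j-1}$ and $h_j\in N_\s(u_j)\setminus\{h_{j-1}\}$. Rerouting each $u_j$ from $h_{j-1}$ to $h_j$ along such a path yields a valid assignment in which $\lpi_{h_0}$ drops by one, $\lpi_{g}$ rises by one, and all remaining loads are unchanged, so $\Phi$ changes by $2(\lpi_g-\lpi_{h_0}+1)$. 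Minimality of $\Phi$ thus forces $\lpi_g\ge\lpi_{h_0}-1$ whenever $g$ is reachable from $h_0$. (Any valid assignment admitting no $\Phi$-decreasing reroute has the same property and is found efficiently by local search, since $\Phi$ is a bounded nonnegative integer; so the argument is constructive.)

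The core of the proof is to show this $\pi$ is rounded. Let $c^\ast=\max_{f\in F_i}\lpi_f$, pick $f_0$ with $\lpi_{f_0}=c^\ast$, and let $R$ be $\{f_0\}$ together with all facilities reachable from $f_0$. Then every $f\in R$ has $\lpi_f\ge c^\ast-1$. Moreover, if $\pi(v)\in R$ then every facility of $N_\s(v)\cap F_i$ is reachable from $f_0$ (follow a path to $\pi(v)$, then step to $v$, then to that facility), hence lies in $R$; so whenever $F_i\setminus R\ne\varnothing$, the clients in $A_\s(F_i\setminus R)\cap V_i$ are precisely those that $\pi$ assigns into $F_i\setminus R$, a set of size $|V_i|-\sum_{f\in R}\lpi_f$. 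Applying the displayed inequality to $T=F_i\setminus R$ (and the trivial bound $0$ when $F_i\setminus R=\varnothing$) gives $|V_i|-\sum_{f\in R}\lpi_f\ge b\,(p-|R|)$, that is $\sum_{f\in R}\lpi_f\le b\,|R|$; on the other hand $\sum_{f\in R}\lpi_f\ge(|R|-1)(c^\ast-1)+c^\ast=|R|(c^\ast-1)+1$ because $f_0\in R$. Combining yields $b\,|R|\ge|R|(c^\ast-1)+1$, hence $b>c^\ast-1$ and, since $c^\ast$ is an integer, $c^\ast\le\lceil b\rceil$. A symmetric argument — taking $f_0$ of minimum load and the reversed notion of reachability (rerouting clients \emph{towards} $h_0$), which minimality of $\Phi$ controls in the same way — shows $\min_{f\in F_i}\lpi_f\ge\lfloor b\rfloor$. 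Therefore all loads lie in $\{\lfloor b\rfloor,\lceil b\rceil\}$, so $\pi$ is the desired assignment.

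I expect the main obstacle to be exactly this final counting step: the straightforward estimate only yields $c^\ast\le b+1$, which is too weak when $b$ is an integer, and it is the extra $+1$ — arising because $f_0$ has load strictly larger than the guaranteed lower bound $c^\ast-1$ on the facilities of $R$ — that sharpens the bound to $c^\ast\le\lceil b\rceil$ (and dually for the minimum). A more pedestrian point that still needs care is confirming that the minimality in the MNS definition really transfers to the displayed inequality, which relies on $A_\s(T)$ meeting $V\setminus\bigcup_{j<i}V_j$ only inside $V_i$ for $T\subseteq F_i$.
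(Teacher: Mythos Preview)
Your argument is correct. Both proofs work class by class, both rely on the same consequence of the minimum-neighbourhood property (for every $T\subseteq F_i$ one has $|A_\s(T)\cap V_i|\ge \ell(C_i)\,|T|$), and both use alternating paths in the client-to-facility bipartite graph. But the organisation is genuinely different. The paper \emph{constructs} the assignment incrementally: starting from the empty assignment, it repeatedly finds an augmenting path in the residual graph from a currently minimum-load facility to an unassigned client and augments along it, so the invariant ``all load differences are at most one'' is maintained throughout; the MNS inequality is invoked only to certify that such a path exists whenever unassigned clients remain. You instead start from \emph{any} complete assignment that minimises the quadratic potential $\Phi(\pi)=\sum_f|\pi^{-1}(f)|^2$ and argue \emph{a posteriori} that it must already be balanced: no $\Phi$-improving reroute means any facility reachable from a maximum-load facility has load at least $c^\ast-1$, and then the MNS inequality applied to the complement of the reachable set forces $c^\ast\le\lceil b\rceil$ (dually for the minimum). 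Your route gives the clean structural statement that any local optimum of $\Phi$ under alternating-path swaps is rounded, and yields a local-search algorithm for free; the paper's route is a direct augmenting construction closer in spirit to max-flow and integrates more smoothly with the flow-based computation of $\pi$-favouring equilibria used later. One cosmetic remark: you reuse the symbols $\pi$ and $\lpi$ for the assignment and its load vector, which collides with the paper's later use of $\pi$ for a facility permutation and $\lpi$ for the permuted load vector, so if you splice this in you should rename.
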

\begin{proof}
    We compute a rounded \oneProfile{} by appropriately assigning the clients to facilities for each class separately:
    
    For each class $C_i$, we define the directed, bipartite \emph{residual graph} $R_i$ based on a current (pure) \oneProfile{} $\sigma(\s)$.
    %(not necessarily feasible) \oneEq{}:
    Let $R_i=(V_i \cup F_i, E_i)$ with $(f,v)\in E_i$ if $\sigma(\s)_{v,f} = 0$ and $v \in A_\s(f)$, and the reverse edge in $(v,f)\in E_i$ if $\sigma(\s)_{v,f} = 1$.
    %The intuition of this residual graph is that it enables us to find augmenting paths to add another client to the \oneProfile{}.
    We start with all clients unassigned and repeat the following steps until all clients of $C_i$ are assigned:
    \begin{enumerate}
        \item Let $F^* \subseteq F_i$ be the facilities of $C_i$ with lowest current load.
        \item\label[step]{step-path} In the residual graph $R_i$, find a simple path from set $F^*$ to the set of unassigned clients in $V_i$, i.e., the result is a path between some facility of current minimal load and some unassigned client.
        \item\label[step]{step-augment} Use the path to augment the client assignment, i.e, if there is an edge $(f,v)$ in the path, then assign $v$ to $f$ and if there is an edge $(v,f)$, then unassign $v$ from $f$.
    \end{enumerate}
    \Cref{fig:rounded-existence} shows an intermediate state of an example run of this procedure.
    \Cref{step-augment} keeps the facility loads constant for all facilities except for the first one in the path.
    By the choice of $F^*$, we maintain the invariant that the maximum difference of facility loads is $1$.
    Hence in the end the loads must be $\lfloor \ell(C_i) \rfloor$ or $\lceil \ell(C_i) \rceil$ and the result is a rounded \oneEq{}.
    \begin{figure}
        \centering
        \begin{tikzpicture}
            % left
            \node[vert] (e1) [label=left:$e$] at (0*\trheight*0.95, 0.9*\trlength) {};
            \node[vert] (f1) [label=left:$f$] at (1*\trheight*0.95, 0.9*\trlength) {};
            \node[vert] (g1) [label=left:$g$] at (2*\trheight*0.95, 0.9*\trlength) {};
            \node[vert] (h1) [label=left:$h$] at (3*\trheight*0.95, 0.9*\trlength) {};
            \node[vert] (u1) at (0*\trheight*0.95, 0*\trlength) {};
            \node[vert] (u2) at (1*\trheight*0.95, 0*\trlength) {$u$};
            \node[vert] (u3) at (2*\trheight*0.95, 0*\trlength) {};
            \node[vert] (u4) at (3*\trheight*0.95, 0*\trlength) {$v$};
            
            \node[fac] at (e1) {};
            \node[fac] at (f1) {};
            \node[fac] at (g1) {};
            \node[fac] at (h1) {};
            
            % right
            \node[vert] (v1) at (7.5*\trheight * 0.5, 0*\trlength) {};
            \node[vert] (v2) at (8.5*\trheight * 0.5, 0*\trlength) {};
            \node[vert] (v3) at (9.5*\trheight * 0.5, 0*\trlength) {};
            \node[vert] (u)  at (10.5*\trheight * 0.5, 0*\trlength) {$u$};
            \node[vert] (v4) at (11.5*\trheight * 0.5, 0*\trlength) {};
            \node[vert] (v5) at (12.5*\trheight * 0.5, 0*\trlength) {};
            \node[vert] (v6) at (13.5*\trheight * 0.5, 0*\trlength) {};
            \node[vert] (v)  at (14.5*\trheight * 0.5, 0*\trlength) {$v$};
            
            \node[fac] (e) [label=left:$e$] at (4*\trheight, 0.9*\trlength) {};
            \node[fac] (f) [label=left:$f$] at (5*\trheight, 0.9*\trlength) {};
            \node[fac] (g) [label=left:$g$] at (6*\trheight, 0.9*\trlength) {};
            \node[fac] (h) [label=left:$h$] at (7*\trheight, 0.9*\trlength) {};
            
            % rectangles top
            \coordinate(f1a) at ($(e) - (0.25*\trheight + 0.25* \trheight - 0.1, 0.25*\trheight)$);
            \coordinate(f2a) at ($(e) - (0.25*\trheight + 0.25* \trheight + 0.0, 0.25*\trheight + 0.1)$);
            \coordinate(f3a) at ($(e) - (0.25*\trheight + 0.25* \trheight + 0.1, 0.25*\trheight + 0.2)$);
            \coordinate(f1b) at ($(f) + (0.25*\trheight + 0.25* \trheight, 0.6)$);
            \coordinate(f2b) at ($(g) + (0.25*\trheight + 0.25* \trheight, 0.7)$);
            \coordinate(f3b) at ($(h) + (0.25*\trheight + 0.25* \trheight + 0.1, 0.8)$);
            \draw[thick,dotted] (f1a) rectangle (f1b);
            \draw[thick,dotted] (f2a) rectangle (f2b);
            \draw[thick,dotted] (f3a) rectangle (f3b);
            \node[draw=none, anchor=north east] at (f1b) {$F^*$};
            \node[draw=none, anchor=north east] at (f2b) {$F^\text{all}$};
            \node[draw=none, anchor=north east] at (f3b) {$F_i$};

            % rectangles bottom
            \coordinate(v1a) at ($(u) - (0.28*\trheight, 0.8)$);
            \coordinate(v2a) at ($(e) - (0.25*\trheight + 0.25* \trheight + 0.1, 0.8 +\trlength)$);
            \coordinate(v1b) at ($(u) + (0.28*\trheight, 0.25*\trheight)$);
            \coordinate(v2b) at ($(h) + (0.25*\trheight + 0.25* \trheight + 0.1, 0.25*\trheight +0.1 - 0.9*\trlength)$);
            \draw[thick,dotted] (v1a) rectangle (v1b);
            \draw[thick,dotted] (v2a) rectangle (v2b);
            \node[draw=none, anchor=south west] at ($(v1a) - (0.13, 0)$) {$V^\text{all}$};
            \node[draw=none, anchor=south west] at (v2a) {$V_i$};

            % edges left
            \draw (u1) edge (e1);
            \draw (u2) edge (f1) edge[dashed] (g1);
            \draw (u3) edge (g1);
            \draw (u4) edge[dashed] (h1);
            
            % edges right
            \draw (e) edge[very thick,cred] (v1) edge (v2);
            \draw (f) edge (v3) edge[very thick,cred] (u);
            \draw (u) edge[very thick,cred, dashed] (g);
            \draw (g) edge[very thick,cred] (v4) edge (v5);
            \draw (h) edge (v6);
            \draw (v) edge[dashed] (h);
        \end{tikzpicture}
        \caption{Right: An intermediate state of the algorithm to compute a rounded \oneEq{} in \Cref{thm:rounded-eq-existence} for the instance on the left.
        The dashed edges denote current client assignments to facilities.
        The red paths mark two (out of many) augmenting paths for \Cref{step-path}.}
        \label{fig:rounded-existence}
    \end{figure}
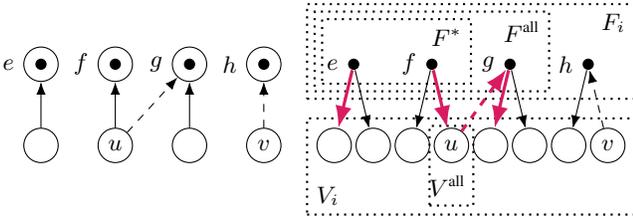

    It remains to show that the path in \Cref{step-path} always exists in each iteration.
    Assume towards contradiction that there is no such path, but an unassigned client.
    Let $F^\text{all}$ be the set of all facilities reachable from the set $F^*$ (with $F^* \subseteq F^\text{all}$) and let $V^\text{all}$ be the set of clients currently assigned to any facility in $F^\text{all}$.
    We have $V^\text{all} = A_\s(F^\text{all}) \cap V_i$ as there is no unassigned client reachable from $F^*$ and any assigned client in $A_\s(F^\text{all}) \cap V_i$ has an edge to her assigned facility, meaning that her facility is also in $F^\text{all}$.
    Now $F_i\setminus F^\text{all}\neq\varnothing$, because the unassigned client is in the attraction range of some facility in $F_i$.
    Since all facilities in $F_i\setminus F^\text{all}$ have a load higher than the average of $F^\text{all}$, we get
    \[
        \frac{w(A_\s(F^\text{all}) \cap V\setminus \bigcup_{j=1}^{i-1} V_j)}{|F^\text{all}|} < \frac{w(A_\s(F_i) \cap V\setminus \bigcup_{j=1}^{i-1} V_j)}{|F_i|}
    \]
    which contradicts $F_i$ being a minimum neighborhood set.
\end{proof}

We later use the lexicographic increasingly sorted vector of facility loads $\lsort$ as a potential function.
By construction of the class set and by definition of rounded client profiles, we immediately get the following.
\begin{observation}[Identical Load Vector for Rounded Profiles] \label{lem: sorted facility loads identical}
    For an instance of the unweighted atomic \flg{} and a facility placement profile $\s$, let $\sigma(\s)$ and $ \sigma'(\s)$ denote two rounded \oneProfile{}s for $\s$.
    Then, $\lsort(\s, \sigma) = \lsort(\s, \sigma')$.
\end{observation}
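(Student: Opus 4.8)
The plan is to reduce the claim to the statement that the \emph{multiset} of facility loads is the same under any rounded \oneProfile{} for~$\s$; since the lexicographically increasingly sorted vector is uniquely determined by the multiset, this is enough. Two facts drive the argument: the classes partition the facilities, i.e.\ $F=\bigsqcup_i F_i$ (each step of the MNS construction removes a nonempty $F_i$ and the process continues until $F$ is exhausted, and by \Cref{cor:class-set-unique} the $F_i$ are well defined and disjoint), and in a rounded profile each facility receives load only from clients of its own class.

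To see the latter, fix a rounded \oneProfile{} $\sigma(\s)$, a class $C_i$, and a facility $f\in F_i$. Any client $v\in A_\s(f)$ has $f\in N_\s(v)$, so by \Cref{cor:client-min-class} she lies in a class $C_j$ with $j\le i$; if $j<i$ then, since rounded profiles are \emph{pure} (\Cref{def:rounded}) and assign $v$ to exactly one facility of $F_j$, we get $\sigma(\s)_{v,f}=0$. Hence $f$ is patronized only by clients of $V_i$. Conversely, by \Cref{def:rounded} every client of $V_i=A_\s(F_i)\setminus\bigcup_{j<i}V_j$ is assigned with probability~$1$ to exactly one facility of $F_i$. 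Summing the load identity $\ell_f(\s,\sigma)=\sum_v \sigma(\s)_{v,f}w(v)$ over $f\in F_i$ therefore gives
\[
\sum\nolimits_{f\in F_i}\ell_f(\s,\sigma)=\sum\nolimits_{v\in V_i}w(v)=w(V_i),
\]
a quantity that depends only on $\s$, not on the chosen rounded profile.

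Now, each $\ell_f(\s,\sigma)$ with $f\in F_i$ lies in $\{\lfloor\ell(C_i)\rfloor,\lceil\ell(C_i)\rceil\}$, two consecutive integers (or a single integer when $\ell(C_i)\in\mathbb{Z}$), and the $|F_i|$ of these values sum to the fixed number $w(V_i)$; this uniquely fixes how many facilities of $F_i$ take the value $\lceil\ell(C_i)\rceil$ and how many take $\lfloor\ell(C_i)\rfloor$. Thus the restriction of the load multiset to $F_i$ is the same for $\sigma(\s)$ and $\sigma'(\s)$, and since $F=\bigsqcup_i F_i$, so is the full load multiset; hence $\lsort(\s,\sigma)=\lsort(\s,\sigma')$. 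The only point requiring care is the claim that a facility in $F_i$ gets no load from clients outside $V_i$, which is exactly where \Cref{cor:client-min-class} and the purity of rounded profiles are used; the rest is bookkeeping.
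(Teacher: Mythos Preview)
Your proposal is correct and is precisely the natural elaboration of the paper's one-line justification. The paper presents this as an observation that follows ``immediately'' from the construction of the class set and \Cref{def:rounded}, without an explicit argument; what you have written spells out exactly why: within each class $C_i$ the total load is the fixed number $w(V_i)$, and together with the per-facility constraint $\ell_f\in\{\lfloor\ell(C_i)\rfloor,\lceil\ell(C_i)\rceil\}$ this pins down the load multiset on $F_i$, hence on all of $F$.
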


This enables switching between distinct rounded client equilibria without changing the value of the potential function $\lsort$.

\subsection{Favoring Client Equilibria}
We later seek to find a SPE by means of a sequence of improving facility moves.
However, it turns out, that it is not sufficient to do so with an arbitrary (rounded) \fullEq{}.
Due to differences in the rounding in $\sigma$ for different FPPs $\s$, there might exist an improvement sequence that cycles through a sequence of FPPs $\s_1,\s_2,\ldots,\s_m,\s_1$, while $\lsort(\s_i,\sigma)=\lsort(\s_j,\sigma)$ for all $i,j$.
See \Cref{fig:no-potential}.

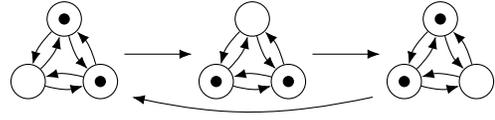
\begin{figure}
    \centering
    \begin{tikzpicture}
        \foreach \i / \name in {0/a,2.5/b,5/c} {
            \node[vert] (\name1) at (\i,0) {};
            \node[vert] (\name2) at (\i + 0.5*\trlength * 0.8, -\trheight * 0.8) {};
            \node[vert] (\name3) at (\i - 0.5*\trlength * 0.8, -\trheight * 0.8) {};
            \draw (\name1) to[bend right=15] (\name2);
            \draw (\name1) to[bend right=15] (\name3);
            \draw (\name2) to[bend right=15] (\name1);
            \draw (\name2) to[bend right=15] (\name3);
            \draw (\name3) to[bend right=15] (\name1);
            \draw (\name3) to[bend right=15] (\name2);
        }
        \node[fac] at (a1) {};
        \node[fac] at (a2) {};
        \node[fac] at (b3) {};
        \node[fac] at (b2) {};
        \node[fac] at (c3) {};
        \node[fac] at (c1) {};
        
        \draw (0.8, -0.45*\trheight) to (1.7, -0.45*\trheight);
        \draw (3.3, -0.45*\trheight) to (4.2, -0.45*\trheight);
        \draw (4.1, -1*\trheight) to[bend left=12] (0.9, -1*\trheight);
    \end{tikzpicture}
    \caption{The following rounded \fullEq{} produces the given best response cycle for the facilities in this instance: Clients prefer locations in clockwise order so that the facility in the preferred location receives load $2$ and the other facility load $1$.} \label{fig:no-potential}
\end{figure}
To circumvent cycling, we need to ensure that rounding is consistent between FPPs.
We solve that by making sure that the clients favor facilities based on a specifically chosen ordering of the facilities.
Therefore, for a given permutation $\pi$ of the set of facilities, let us first define a $\pi$-\emph{favoring} rounded equilibrium as follows.
\begin{definition} [$\pi$-\emph{favoring}] \label{def:favoring}
    For a fixed permutation $\pi$ of the set of facilities, let $\lpi:=\left(\ell_{\pi(1)}(\s, \sigma),\ldots,\ell_{\pi(k)}(\s, \sigma)\right)$.
    A rounded \oneEq{} $\sigma(\s)$ is $\pi$-\emph{favoring} if it maximizes the vector $\lpi$ lexicographically among all rounded \oneProfile{}s for $\s$.
    Similarly, a \fullEq{} $\sigma$ is $\pi$-favoring if $\sigma(\s')$ is a $\pi$-favoring \oneEq{} for every $\s' \in S$.
\end{definition}

As our key technical ingredient, we next show that any improving facility move from $(\s,\sigma$) increases the vector $\lsort$ lexicographically if $\sigma$ is $\pi$-favoring for a specific permutation $\pi$.
In particular, for this $\pi$ the entries of $\lpi$ are non-increasing.

\begin{lemma} [Lexicographical Potential Increase]\label{lem:lex-increase}
Let $\s$ be a facility placement profile.
Every improving facility move away from $(\s, \sigma)$ strictly increases $\lsort$ lexicographically if $\sigma$ is $\pi$-favoring for a permutation $\pi$ with the following property:
For all $f, g \in F$, if $\ell_f(\s, \sigma) > \ell_g(\s, \sigma)$, then $\pi^{-1}(f)<\pi^{-1}(g)$.
\end{lemma}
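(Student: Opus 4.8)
Abbreviate $\s' = (s'_f,\s_{-f})$ and $L=\ell_f(\s,\sigma)$. Since rounded profiles assign every client integrally, all facility loads under $\sigma(\s)$ and $\sigma(\s')$ are integers, so the improving condition reads $\ell_f(\s',\sigma)\ge L+1$. By \Cref{lem: sorted facility loads identical} the value of $\lsort$ at a placement does not depend on which rounded profile we pick, so it suffices to prove $\lsort(\s',\sigma)>_{\mathrm{lex}}\lsort(\s,\sigma)$, where both sides now depend on $\s',\s$ only. I would use the elementary reformulation of lexicographic order on increasingly sorted vectors: writing $M_t(\s,\sigma):=|\{g\in F:\ell_g(\s,\sigma)\le t\}|$, one has $\lsort(\s',\sigma)>_{\mathrm{lex}}\lsort(\s,\sigma)$ if and only if there is an integer threshold $t$ with $M_{t'}(\s',\sigma)=M_{t'}(\s,\sigma)$ for all integers $t'<t$ and $M_t(\s',\sigma)<M_t(\s,\sigma)$. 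The task is thus to exhibit such a threshold, the natural candidate being $t=L$ or slightly below.

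\textbf{The two effects of the move.} Relative to $(\s,\sigma)$, passing to $\s'$ induces exactly two kinds of client re-routing: (a) clients that $\sigma(\s)$ served at $f$ but that can no longer reach $f$ at $s'_f$ must switch to other facilities, which can only raise loads; and (b) at $s'_f$ the facility $f$ attracts clients, pulling them off their previous facilities, which can lower some loads. I would bound each. For (a): the client-equilibrium conditions at $\s$ force every facility adjacent to a client of $f$ to have $\s$-load at least $L-1$, and, together with \Cref{cor:client-min-class}, the re-routing can be organized as moves that take a client off a facility and put her on one of load exactly one smaller; iterating, (a) raises a nested ``staircase'' of facilities, one per load level $L-1,L-2,\dots$ down to some bottom level $\ell_0$, each by $+1$, and never touches a facility below level $\ell_0$. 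For (b): the client-equilibrium conditions at $\s'$ show that every facility $f$ poaches from keeps $\s'$-load at least $\ell_f(\s',\sigma)-1\ge L$; hence (b) only lowers loads of facilities that remain at level $\ge L$, so no facility is pushed below level $L$ by (b) — which I would also have to confirm for the secondary re-routing (b) may trigger, again via equilibrium arguments at $\s'$.

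\textbf{Conclusion and the main obstacle.} Combining the two pictures, every facility whose load changes either changes strictly above level $L-1$ or lies on the downward staircase of (a); the staircase contributes nested strict decreases of the $M_t$-counts, so at its bottom level $\ell_0$ we get $M_{\ell_0}(\s',\sigma)<M_{\ell_0}(\s,\sigma)$ while all $M_{t'}$ with $t'<\ell_0$ are untouched, and $f$ having moved up and out of level $L$ guarantees the staircase is nonempty — giving the required threshold ($t=\ell_0$, or $t=L$ if no descent occurs). The main obstacle I anticipate is the interaction of (a) and (b): a facility near $s'_f$ can simultaneously lie on the staircase of (a) and be poached from by $f$, and (b)'s secondary re-routing could in principle pull a facility down to level $L-1$, so I would set up a single global accounting — for instance tracking the change of the multiset of load levels, or realizing the entire reassignment as one augmenting-path computation in the style of \Cref{thm:rounded-eq-existence} — to show that the net change of each $M_t$ has the right sign and, in particular, that (b) can never cancel the strict gain produced by $f$'s jump. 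Verifying this, together with checking that the $\pi$-favoring choice is exactly what keeps the realized loads $\ell_f(\s,\sigma)$ and $\ell_f(\s',\sigma)$ consistent between the two placements (which is what rules out the cyclic behavior of \Cref{fig:no-potential}), is where I expect the real work to lie.
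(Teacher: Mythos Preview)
Your proposal has a genuine gap: you never actually \emph{use} the $\pi$-favoring hypothesis. The entire (a)/(b) threshold-counting argument proceeds without any reference to $\pi$, and $\pi$-favoring appears only in the last sentence as something to ``verify'' afterwards. But the lemma is \emph{false} for arbitrary rounded \fullEqa{} --- that is exactly the content of \Cref{fig:no-potential} --- so an argument that does not invoke $\pi$-favoring at a decisive point cannot succeed. The (a)/(b) decomposition is also misleading as stated: $\sigma(\s')$ is not obtained from $\sigma(\s)$ by locally re-routing the clients that gained or lost access to $f$; it is an independently chosen $\pi$-favoring rounded equilibrium with respect to the (possibly globally different) class set of $\s'$. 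Facilities far from both of $f$'s positions can change load purely because the floor/ceiling tie-breaking within a class changed, and your equilibrium inequalities at $\s$ or $\s'$ only pin down class averages, not which facility in a class gets the ceiling. This is precisely your ``main obstacle,'' and it is not a detail to be checked --- it is the whole proof.

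The paper's argument uses $\pi$-favoring as a \emph{maximality} condition that directly supplies the contradiction, rather than as a post-hoc consistency check. Assuming some facility $d$ drops to load at most $L$ in $\s'$, it builds a reassignment multigraph on $F$ with an edge $(f_1,f_2)$ for every client that is assigned to $f_1$ in $\sigma(\s)$ and to $f_2$ in $\sigma(\s')$, and extracts a directed path $p$ from $d$ to some facility $g$ whose load went up. Chaining \Cref{cor:client-min-class} along $p$ forces all the relevant floors of class loads to coincide, so swapping the assignment of the clients on $p$ yields another valid rounded profile --- for $\s$ and $\s'$ in the case $g\ne f$, and for $\s'$ in the case $g=f$ --- that is lexicographically larger in $\lpi$, contradicting $\pi$-favoring. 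The extra ordering hypothesis on $\pi$ (higher $\s$-load $\Rightarrow$ earlier in $\pi$) is exactly what is needed to handle $g=f$. Your $M_t$ reformulation is fine, but to close it you would still need a path-swap argument of this kind to rule out any $d$ entering $M_L$; without it the plan remains a restatement of what has to be shown.
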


\begin{proof}
    Assume for the sake of contradiction that some facility $f$ can improve, changing the state from $\s$ to $\s'$ but without increasing $\lsort$ lexicographically.
    Let $\C=\{C_1,C_2,\ldots\}$ and $\C'=\{C'_1,C'_2,\ldots\}$ be the class sets for $\s$ and $\s'$, respectively.
    
    Since the load of $f$ increased but $\lsort$ did not increase lexicographically, there exists a facility $d \in F$ with
    \begin{equation}
        \ell_d(\s', \sigma) < \ell_d(\s, \sigma) \text{ and } \label{E1}
        \ell_d(\s', \sigma) \leq \ell_f(\s, \sigma)\text.
    \end{equation}
    As $\sigma$ is a rounded \fullProfile{}, by \Cref{def:rounded}, we have
    % \begin{equation}
    %     \ell_d(\s,\sigma)\le \lceil \ell(C(d)) \rceil\text{ and }
    %     \lfloor \ell(C'(d)) \rfloor \le \ell_d(\s',\sigma)\text.\label{E2}
    % \end{equation}
    % Combining (\ref{E1}) and (\ref{E2}) yields
    \begin{equation}
        \label{E3}
        \lfloor \ell(C'(d)) \rfloor \le \ell_d(\s',\sigma) \le \ell_d(\s,\sigma) - 1 \le \lfloor \ell(C(d) \rfloor \text.
    \end{equation}
    For our analysis, we construct the following \emph{reassignment graph} $G$ that captures the difference in client assignment between $\sigma(\s)$ and $\sigma(\s')$.
    This is a directed, multigraph $G=(F,E)$ where we add an edge $(f_1,f_2)$ for every $v$ with $\sigma(\s)_{v,f_1} =1$ and $\sigma(\s')_{v,f_2} =1$.
    As the load of $d$ has decreased, the out-degree of $d$ is strictly larger than her in-degree in $G$.
    Therefore, there exists a directed path $p$ from $d$ to a facility $g$ which has an in-degree that is strictly larger than her out-degree.
    Comparing the assignments of the clients that determine the edges of $p$ between $\sigma(\s)$ and $\sigma(\s')$, we observe that the difference only affects the loads of $d$ and $g$.
    Hence for this facility $g$, we have
    \begin{equation} \label{E4}
        \ell_g(\s,\sigma) < \ell_g(\s',\sigma)\text.
    \end{equation}
    Again, by \Cref{def:rounded}, we have
    % \begin{equation} \label{E5}
    %     \lfloor \ell(C(g)) \rfloor \le \ell_g(\s,\sigma)\text{ and }
    %     \ell_g(\s',\sigma)\le \lceil \ell(C'(g)) \rceil \text.
    % \end{equation}
    % Combining (\ref{E4}) and (\ref{E5}) gives
    \begin{equation} \label{E6}
        \lfloor \ell(C(g)) \rfloor \le \ell_g(\s,\sigma) \le \ell_g(\s',\sigma) -1 \le \lfloor \ell(C'(g)) \rfloor \text.
    \end{equation}
    For any two consecutive facilities $a,b$ on the path $p$, there is a client $v$ that is assigned to $a$ in $\s$ and $b$ in $\s'$.
    By \Cref{cor:client-min-class} we conclude that $\ell_a(\s, \sigma)\le \ell_b(\s, \sigma)$ and $\ell_a(\s', \sigma)\ge \ell_b(\s', \sigma)$.
    This implies for the average load of the respective classes $\ell(C(a)) \le \ell(C(b))$ and $\ell(C'(a)) \ge \ell(C'(b))$ which yields
    \begin{equation}
        \ell(C(d)) \le \ell(C(g)) \text{ and } \label{E7}
        \ell(C'(d)) \ge \ell(C'(g)) \text.
    \end{equation}
    We now distinguish two cases.
\begin{description}[leftmargin=0.4cm]
\item[Case 1:] $g \ne f$.
    Combining (\ref{E6}) with (\ref{E3}) and (\ref{E7}) results in
    \begin{equation} \label{E8}
    \begin{split}
        \lfloor \ell(C'(d)) \rfloor \le \lfloor \ell(&C(d)) \rfloor \le \lfloor \ell(C(g)) \rfloor \\&\le \lfloor \ell(C'(g)) \rfloor
        \le \lfloor \ell(C'(d)) \rfloor \text.
    \end{split}
    \end{equation}
    And, hence, the above holds with equality.
    By (\ref{E1}), (\ref{E4}), and \Cref{def:rounded}, we get that
    \begin{equation} \label{E9}
        \ell_g(\s,\sigma) +1 = \ell_d(\s',\sigma) +1 = \ell_g(\s',\sigma) = \ell_d(\s,\sigma).
    \end{equation}

    By (\ref{E8}) and (\ref{E9}), we have that both assignments of the clients that determine the path $p$ yield a feasible rounded \oneEq{} for both $\s$ and $\s'$.
    As $d$ is preferred in $\sigma(\s)$, we must have $\pi^{-1}(d) > \pi^{-1}(g)$, however, from $\sigma(\s')$, we get that $\pi^{-1}(g) > \pi^{-1}(d)$, which yields the contradiction.
    
\item[Case 2:] $g=f$.
    Using (\ref{E7}), \Cref{def:rounded}, and (\ref{E1}) we have
    \begin{equation} \label{E10}
        \lfloor \ell(C'(f)) \rfloor \le \lfloor \ell(C'(d)) \rfloor \le \ell_d(\s',\sigma)\le \ell_f(\s,\sigma)\text.%\le \lceil \ell(C(f)) \rceil
    \end{equation}
    By $f$ improving from $\s$ to $\s'$ and \Cref{def:rounded}:
    \begin{equation} \label{E11}
        %  \lfloor \ell(C(f)) \rfloor \le
        \ell_f(\s,\sigma) < \ell_f(\s',\sigma) \le \lfloor \ell(C'(f)) \rfloor +1\text,
    \end{equation}
    which implies equality for
    \begin{equation} \label{E12}
    \begin{split}
        \lfloor \ell(C'(f)) \rfloor
        &= \lfloor \ell(C'(d)) \rfloor = \ell_d(\s',\sigma) \\&= \ell_f(\s,\sigma) =\ell_f(\s',\sigma) -1.
    \end{split}
    \end{equation}
    Together with (\ref{E1}) we get
    $\ell_d(\s, \sigma) > \ell_f(\s, \sigma)$ which implies $\pi^{-1}(d) < \pi^{-1}(f)$ by the additional property of the lemma.
    This, however, contradicts $\sigma(\s')$ being $\pi$-favoring.
    In $\sigma(\s')$, assigning the clients that determine path $p$ as in $\sigma(\s)$ lexicographically increases $\lpi$ since the load of $d$ increases by $1$ while the load of $f$ decreases by $1$.
    
    Observe that this still yields a feasible rounded \oneProfile{} as the loads of all facilities except $d$ and $f$ did not change and we have that $\ell_f(\s',\sigma) = \lfloor \ell(C'(f)) \rfloor +1$ and since $\ell(C'(d)) \ge \ell(C'(f))$ and $\lfloor \ell(C'(d)) \rfloor =\ell_d(\s',\sigma)$ also $\ell_d(\s',\sigma) +1=\lceil \ell(C'(d)) \rceil$.
\end{description}
    In both cases we reached a contradiction, hence such a facility $d$ cannot exist.
    The lemma follows.
\end{proof}

The existence of $\pi$-favoring \oneEqa{} and $\pi$-favoring \fullEqa{} follows directly from \Cref{thm:rounded-eq-existence} and \Cref{def:favoring}.
Moreover, we show that a $\pi$-favoring \fullEq{} can be computed in polynomial time.
\Cref{alg:favoring} computes $\pi$-favoring \oneEqa{} by employing a reduction to integral \textsc{MaxCostFlow}, where for each client a demand of $1$ flows to the facilities of her class in her shopping range and then to a sink node $t$.

\begin{algorithm}[t]
    \caption{findFavoringEquilibrium($H$, $\s$, $\pi$)}
    \label{alg:favoring}
    $G \gets (F \cup V \cup \{s, t\}, E_N \cup E_s \cup E_t \cup E_\text{add})$\;
    $E_s \gets \{(s, v, 1, 0) \mid v \in V\}$\;
    $E_N \gets \{(v, f, 1, 0) \mid C(v) = C(f) \wedge f \in N_\s(v)\}$\;
    $E_t \gets \{(f, t, \lfloor \ell(C(f)) \rfloor, 2^k) \mid f \in F\}$\;
    $E_\text{add} \gets \{(f, t, 1, 2^{k - \pi^{-1}(f)}) \mid f \in F\}$\;
    compute integral $s$-$t$-\textsc{MaxCostFlow} on $G$\;
    \For{$v \in V$}{
        assign $v$ to facility $f$ with nonzero flow on $(v,f)$\;
    }
\end{algorithm}

\Cref{fig:example-favoring-alg} shows an example of the construction.
Each facility $f$ has one edge with capacity $\lfloor \ell(C(f)) \rfloor$ and cost $2^k$ to $t$ and a second edge to $t$ with capacity $1$ and cost $2^{k - \pi^{-1}(f)}$.
This ensures that each facility receives an allowed value of load for a rounded \oneEq{}, while the cost of the edges with capacity $1$ prefers facilities appearing earlier in $\pi$.

\begin{figure}
    \centering
    \begin{tikzpicture}
        \node[vert] (v1a)[label=left:{$v_1$}][label=right:{$f_1$}] at (-1.9*\trheight,0.8*\trheight) {};
        \node[vert] (v2a)[label=left:{$v_2$}][label=right:{$f_2$}] at (-1.9*\trheight,0) {};
        \node[vert] (v3a)[label=left:{$v_3$}] at (-1.9*\trheight,-0.8*\trheight) {};
        \node[fac] at (v1a) {};
        \node[fac] at (v2a) {};

        \draw (-1.05*\trheight,0) to (-0.55*\trheight,0);
        
        \node[vert, rectangle, draw=none] (s1) at (0, 0.8*\trheight) {};
        \node[vert, rectangle, draw=none] (s3) at (0, -0.8*\trheight) {};
        \node[vert, rectangle, draw=none] (t1) at (5.2*\trheight, 0.55*\trheight) {};
        \node[vert, rectangle, draw=none] (t3) at (5.2*\trheight,-0.55*\trheight) {};
        \node[vert, rectangle, rounded corners, minimum height = 13pt + 1.6*\trheight cm] (s2) at (0,0) {$s$};
        \node[vert, rectangle, rounded corners, minimum height = 13pt + 1.6*\trheight cm] (t2) at (5.2*\trheight,0) {$t$};
        \node[vert] (v1) at (1.3*\trheight,0.8*\trheight) {$v_1$};
        \node[vert] (v2) at (1.3*\trheight,0) {$v_2$};
        \node[vert] (v3) at (1.3*\trheight,-0.8*\trheight) {$v_3$};
        \node[vert] (f1) at (2.6*\trheight,0.55*\trheight) {$f_1$};
        \node[vert] (f2) at (2.6*\trheight,-0.55*\trheight) {$f_2$};
        \draw (v2a) edge(v1a);
        \draw (v3a) edge(v2a);

    \begin{scope}
        \draw (s1) edge["{$1,0$}"] (v1);
        \draw (s2) edge["{$1,0$}"] (v2);
        \draw (s3) edge["{$1,0$}"] (v3);
        \draw (f1) to[bend left=7, "{$\lfloor \ell(C(f_1)) \rfloor, 2^k$}"] (t1);
        \draw (f1) to[bend right=7, "{$1, 2^{k - \pi^{-1}(f_1)}$}" below] (t1);
        \draw (f2) to[bend left=7, "{$\lfloor \ell(C(f_2)) \rfloor, 2^k$}"] (t3);
        \draw (f2) to[bend right=7, "{$1, 2^{k - \pi^{-1}(f_2)}$}" below] (t3);
        \draw (v1) edge["{$1,0$}"] (f1);
        \draw (v2) edge["{$1,0$}"] (f1);
        \draw (v2) edge["{$1,0$}"] (f2);
        \draw (v3) edge["{$1,0$}"] (f2);
    \end{scope}
    \end{tikzpicture}
    \caption{\Cref{alg:favoring} creates the graph $G$ on the right from the host graph and facility placement profile on the left.
    The edges are labeled with capacity, cost and $k$ is the number of facility agents.}
    \label{fig:example-favoring-alg}
\end{figure}
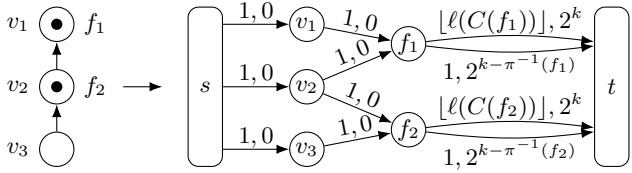

\begin{theorem}[$\pi$-favoring Equilibria Complexity]
    For any \textup{FPP} $\s$ and any $\pi$, a $\pi$-favoring \oneEq{} $\sigma$ can be computed in polynomial time.
\end{theorem}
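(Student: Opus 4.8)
The plan is to establish correctness and polynomial running time of \Cref{alg:favoring}. For the running time, by \Cref{cor:class-set-alg} the class set $\C$ and the average loads $\ell(C_i)$ are computable in polynomial time, so the network $G$ — which has $O(n+k)$ nodes, $O(nk)$ arcs, and integer capacities and costs all bounded by $2^k$, hence encodable in $O(k)$ bits — is built in polynomial time, and an integral maximum-cost $s$-$t$-flow $\phi$ exists (by integrality of all capacities) and is computable in polynomial time by a standard min-cost-flow routine. It then remains to show that the assignment read off from $\phi$ is a well-defined rounded \oneEq{} for $\s$ that is moreover $\pi$-favoring.

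First I would pin down the structure of $\phi$. Since every $s$–$t$-path in $G$ ends on an arc of $E_t\cup E_\text{add}$, which has strictly positive cost, and no arc has negative cost, a maximum-cost flow is automatically a maximum flow (a short residual-path argument: augmenting would strictly raise the cost); by \Cref{thm:rounded-eq-existence} its value equals the number of clients $v$ with $N_\s(v)\neq\varnothing$, the rounded profile there inducing such a flow. Together with integrality and the unit capacities on $E_s$ and $E_N$, this makes the assignment in the \texttt{for}-loop well defined, each such client routed to exactly one facility. Next, the total cost of all cheap arcs is $\sum_{f\in F}2^{k-\pi^{-1}(f)}=2^k-1<2^k$, the cost of a single unit on a high-cost arc, so a maximum-cost flow must saturate every arc $(f,t,\lfloor\ell(C(f))\rfloor,2^k)$ of $E_t$ (again witnessed by the profile of \Cref{thm:rounded-eq-existence}). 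As $E_N$ connects a client only to facilities of her own class, every $v\in V_i$ is assigned inside $F_i$, the loads on $F_i$ sum to $w(V_i)=|F_i|\,\ell(C_i)$, and each $f\in F_i$ carries $\lfloor\ell(C_i)\rfloor$ units through its high-cost arc plus $0$ or $1$ through its cheap arc; hence every facility load lies in $\{\lfloor\ell(C_i)\rfloor,\lceil\ell(C_i)\rceil\}$ (in the integer case the load sum forces all loads equal to $\ell(C_i)$, so no cheap arc is used), so the assignment is a rounded \oneProfile{}, and therefore, as shown earlier, a \oneEq{}.

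Finally I would argue the $\pi$-favoring property. The crucial observation is that rounded \oneProfile{}s decouple across classes and that the whole vector $\lpi$ is determined by the indicators $x_f\in\{0,1\}$ of whether $f$ is overloaded (load $\lceil\ell(C(f))\rceil$): reading $(x_{\pi(1)},\dots,x_{\pi(k)})$ as a binary number gives $\sum_{f\in F}x_f\,2^{k-\pi^{-1}(f)}$, so lexicographically maximizing $\lpi$ among rounded \oneProfile{}s is the same as maximizing this quantity. But in any maximum flow that saturates all high-cost arcs, the $E_t$-part of the cost is the fixed constant $2^k\sum_i|F_i|\lfloor\ell(C_i)\rfloor$ while the $E_\text{add}$-part equals precisely $\sum_{f\in F}x_f\,2^{k-\pi^{-1}(f)}$ for the induced assignment; conversely every rounded \oneProfile{} yields such a flow of the corresponding cost. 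Hence $\phi$, being of maximum cost, induces the lexicographically largest $\lpi$ among all rounded \oneProfile{}s, i.e., it is $\pi$-favoring, and the theorem follows.

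\textbf{Main obstacle.} I expect the delicate part to be this last step: verifying that the ``sum of distinct powers of two'' cost exactly implements the lexicographic order on $\lpi$, that the magnitude gap between $2^k$ and $2^k-1$ forces the objectives ``route every client'' and ``favor earlier facilities'' to be optimized in the correct priority, and that everything decouples cleanly over the classes; the integer case $\ell(C_i)\in\mathbb Z$, where $G$ formally offers a cheap arc that no valid rounded profile may use, is the detail most likely to require care.
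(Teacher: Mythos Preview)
Your proposal is correct and follows essentially the same approach as the paper: build the network of \Cref{alg:favoring}, use the $2^k$ versus $2^k-1$ cost gap together with the existence result \Cref{thm:rounded-eq-existence} to force saturation of the high-cost arcs (hence a rounded profile), and then identify the $E_\text{add}$ cost with a binary encoding of the overload indicators so that maximum cost coincides with lexicographic maximality of $\lpi$. Your treatment is in fact slightly more explicit than the paper's on two points the paper leaves implicit—why a maximum-cost flow is automatically a maximum flow here, and why the integer case $\ell(C_i)\in\mathbb Z$ causes no trouble—so the ``main obstacle'' you anticipate is handled and does not require anything beyond what you have already sketched.
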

\ifdefined\arxiv
\begin{proof}
    First, we show that the result $\sigma(\s)$ of the algorithm is a rounded \oneProfile{}.
    
    Note that since we compute an integral flow, each client is assigned to only one facility.
    By the setup of $E_s$, a client only puts weight on facilities that are both in her class and in her shopping range.
    Observe that the maximum cost contributed by edges in $E_\text{add}$ is $2^k-1$ smaller than sending a single unit of flow through an edge in $E_t$.
    There cannot be a facility that receives a load of less than $\lfloor \ell(C(f)) \rfloor$, because then the total flow cost is lower than $\sum_{f \in F}{ \lfloor \ell(C(f)) \rfloor}2^k$.
    Converting an arbitrary rounded \oneProfile{} (which is guaranteed to exist by \Cref{thm:rounded-eq-existence}) to a flow achieves a cost of at least $\sum_{f \in F}{ \lfloor \ell(C(f)) \rfloor}2^k$.
    If $\ell(C(f))$ is an integer, then no clients are leftover to provide flow through edges in $E_\text{add}$ and otherwise, these edges limit the load of a facility to $\lceil \ell(C(f)) \rceil$.

    Assume towards contradiction that there is a rounded \oneProfile{} $\sigma'(\s)$ for which $\lpi(\sigma) >_\text{lex} \lpi(\sigma)$.
    Let $i$ be the first spot in which these vectors differ with the facility $f=\pi(i)$ receiving load $\lfloor \ell(C(f)) \rfloor$ in $\sigma(\s)$ and $\lceil \ell(C(f)) \rceil$ in $\sigma'(\s)$.
    The corresponding flows have the same cost for edges in $E_t$ and edges $(g,t)$ in $E_\text{add}$ with $\pi^{-1}(g) < i$.
    For $\sigma'(\s)$ the edge $(f,t)$ provides a cost of $2^{k-i}$ higher than the maximum possible total cost $2^{k-i}-1$ of all remaining edges $E_\text{add}$ with $\pi^{-1}(g) > i$, so the flow corresponding to $\sigma(\s)$ was not a maximum cost flow.

    The algorithm needs the class set for the construction of the graph $G$, which can be computed in polynomial time by \Cref{cor:class-set-alg}.
    A \textsc{MaxCostFlow} can be computed using the algorithm by \citet{mincostflow}.
\end{proof}
\else
\noindent A detailed proof is in the \suppmaterial{}.
\fi

\subsection{Constructing a Subgame Perfect Equilibrium}
We now prove the existence of SPEs by providing an algorithm that computes one.
We start with an arbitrary FPP~$\s$ and \fullProfile{} $\sigma$.
If $(\s,\sigma)$ is not a SPE, i.e., there is a facility that can improve resulting in a profile $(\s',\sigma)$, we determine a possibly different \fullProfile{} $\sigma'$ that satisfies the conditions of \Cref{lem:lex-increase} for a possibly different $\pi'$.
We repeat this process until we find a SPE.
This process terminates as by \Cref{lem:lex-increase} in each step the value of $\lsort$ increases.

\begin{theorem}[Existence of SPE] \label{thm:unweighted-spe}
    Every instance of the unweighted atomic \flg{} admits a subgame perfect equilibrium with a pure \fullEq{}.
\end{theorem}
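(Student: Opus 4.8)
The plan is to convert the sketch in the preceding subsection into an explicit algorithm: start from an arbitrary facility placement profile and repeatedly apply an improving facility move, re-selecting a carefully chosen client equilibrium after each move, while using the lexicographically increasingly sorted load vector $\lsort$ as a potential. \Cref{lem:lex-increase} guarantees that each iteration strictly increases $\lsort$ lexicographically, and for an unweighted instance a rounded \oneProfile{} assigns every client to a single facility, so every coordinate of $\lsort$ is an integer in $\{0,1,\dots,n\}$ and only finitely many values are possible. Hence the algorithm terminates, and a state it cannot leave is by construction a subgame perfect equilibrium whose \fullEq{} is rounded, hence pure.

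The invariant I would maintain is that the current state $(\s,\sigma)$ consists of a FPP $\s$ together with a \emph{$\pi$-favoring} rounded \fullEq{} $\sigma$ for some permutation $\pi$ of $F$ that is consistent with the loads at $\s$ in the sense required by \Cref{lem:lex-increase}, i.e.\ $\ell_f(\s,\sigma) > \ell_g(\s,\sigma)$ implies $\pi^{-1}(f) < \pi^{-1}(g)$. To see that such a triple exists for any $\s$: compute the class set of $\s$ and, by \Cref{thm:rounded-eq-existence}, a rounded \oneProfile{} $\hat\sigma(\s)$; let $\pi$ list the facilities in order of non-increasing load under $\hat\sigma(\s)$, ties broken arbitrarily. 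Then $\hat\sigma(\s)$ is a rounded \oneProfile{} whose vector $\lpi$ is non-increasing, so by \Cref{def:favoring} a $\pi$-favoring rounded \oneEq{} $\sigma(\s)$ satisfies $\lpi(\s,\sigma) \ge_{\mathrm{lex}} \lpi(\s,\hat\sigma)$. Since all rounded \oneProfile{}s for $\s$ realize the same multiset of loads (\Cref{lem: sorted facility loads identical}) and the non-increasing arrangement is the lexicographic maximum among all arrangements of that multiset, it follows that $\lpi(\s,\sigma)$ is itself non-increasing, which is exactly the consistency property demanded of $\pi$. Extending $\sigma(\s)$ to a $\pi$-favoring \fullEq{} on all FPPs (again \Cref{thm:rounded-eq-existence}, \Cref{def:favoring}) completes the triple.

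With the invariant in hand the iteration is short. If $(\s,\sigma)$ is not a SPE, then condition~(1) of the definition holds because $\sigma$ is a \fullEq{}, so condition~(2) must fail: some facility $f$ has a move to a FPP $\s'$ with $\ell_f(\s',\sigma) > \ell_f(\s,\sigma)$. By \Cref{lem:lex-increase} this yields $\lsort(\s',\sigma) >_{\mathrm{lex}} \lsort(\s,\sigma)$. Now re-establish the invariant at $\s'$ by the construction above, obtaining a permutation $\pi'$ and a $\pi'$-favoring \fullEq{} $\sigma'$; since $\sigma'(\s')$ and $\sigma(\s')$ are both rounded \oneProfile{}s for $\s'$, \Cref{lem: sorted facility loads identical} gives $\lsort(\s',\sigma') = \lsort(\s',\sigma) >_{\mathrm{lex}} \lsort(\s,\sigma)$, so the potential at the new current state has strictly increased. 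By finiteness of the range of $\lsort$ the process halts, and its final state is the desired SPE with a pure \fullEq{}.

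The heavy lifting is already done by \Cref{lem:lex-increase}; the remaining obstacle is purely organizational but easy to get wrong. One must ensure that the permutation driving the $\pi$-favoring choice is consistent with the loads of the very profile it favors (handled by the squeeze argument in the second paragraph), and that replacing the $\pi$-favoring \fullEq{} by a $\pi'$-favoring one after a move leaves the potential at the new FPP unchanged (handled by \Cref{lem: sorted facility loads identical}). I would isolate both facts as small claims before running the termination argument, so that the application of \Cref{lem:lex-increase} at each step is unambiguous.
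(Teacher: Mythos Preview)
Your proposal is correct and follows essentially the same approach as the paper: iterate improving facility moves under a $\pi$-favoring rounded \fullEq{}, use $\lsort$ as a lexicographic potential, invoke \Cref{lem:lex-increase} for the strict increase, and invoke \Cref{lem: sorted facility loads identical} to show that re-selecting the \fullEq{} after each move leaves the potential unchanged. Your squeeze argument---that the $\pi$-favoring profile inherits the non-increasing $\lpi$ ordering because the multiset of loads is fixed and the non-increasing arrangement is its lexicographic maximum---makes explicit a consistency step between $\pi$ and the newly chosen $\sigma$ that the paper's proof asserts (``the conditions of \Cref{lem:lex-increase} are satisfied by the permutation~$\pi$'') but does not spell out; this is a genuine clarification rather than a different route.
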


\begin{algorithm}[t]
    \caption{findSPE(H, U, k)}
    \label{alg:spe}
    $\s \gets$ arbitrary facility placement profile\;
    $\sigma \gets$ arbitrary rounded \fullProfile{}\;
    \While{$(\s, \sigma)$ is not a SPE}{
        $\s \gets$ execute arbitrary improving facility move\;
        \label{line:fac-move}
        $\pi \gets$ facilities sorted by decreasing loads in $\sigma(\s)$\;
         \label{line:pi}
        $\sigma \gets$ a $\pi$-favoring \fullEq{}\;
         \label{line:sigma}
    }
\end{algorithm}
\begin{proof}
    We show that \Cref{alg:spe} computes a SPE.
    By \Cref{thm:rounded-eq-existence}, an initial rounded profile $\sigma$ exists.
    A $\pi$-favoring \fullEq{} exists by \Cref{thm:rounded-eq-existence} and \Cref{def:favoring}.
    It remains to show that this algorithm terminates.
    To that end, we show that in each iteration except the first, the vector of sorted facility loads increases lexicographically.
    
    The conditions of \Cref{lem:lex-increase} are satisfied by the permutation~$\pi$ in \Cref{line:pi}, so for every improving facility move in \Cref{line:fac-move} with the facility placement profiles $\s$ and $\s'$ before and after the move, the value of $\lsort$ lexicographically increases:
    \[
        \lsort(\s', \sigma) >_\text{lex} \lsort(\s, \sigma).
    \]
    By \Cref{lem: sorted facility loads identical}, $\lsort$ does not change when switching to a different rounded \fullEq{} in \Cref{line:sigma} while keeping the FFP $\s$ fixed.
    Since the number of possible facility placement profiles is finite, the algorithm terminates.
\end{proof}

Regarding the complexity of computing a SPE, we note that a \fullEq{} requires exponential space because there are exponentially many FPPs for which clients need to specify a client profile.
However, for certifying the existence of a SPE and for \Cref{alg:spe}, it is sufficient to specify the \oneEqa{} for the chosen state $\s$ and all states $\s'$ with only a single facility deviation.
With that caveat, each iteration of \Cref{alg:spe} can be performed in polynomial time.
However, the number of iterations needed is an open problem.

\section{SPE in the Weighted Game}
\label{sec:weighted}

We now turn to instances with weighted clients.
Here, SPE do not exist in all instances and it is even NP-complete to decide whether an approximate SPE exists that has a ratio smaller than the golden ratio $\phi$.
In \Cref{fig:no-spe}, we show a simple instance with no SPE and another one that does not admit a $(\phi-\epsilon)$-approximate SPE.
In the \suppmaterial{}, we give detailed proof of the non-existence.

\begin{restatable}[Nonexistence of Weighted SPE]{theorem}{nospe}
Weighted instances of the atomic \flg{} might not admit SPE.
\end{restatable}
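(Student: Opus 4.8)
The plan is to exhibit a small explicit weighted instance and argue by exhaustive case analysis over facility placements that no outcome can be a subgame perfect equilibrium. Concretely, I would take $k=2$ facilities and a host graph on a handful of vertices carrying carefully chosen rational weights, mirroring the construction hinted at in \Cref{fig:no-spe}. The key idea is to create a tension between the two facilities: there should be a ``large-weight region'' and a ``small-weight region'' such that whichever configuration the facilities choose, at least one facility has a profitable deviation \emph{no matter which client equilibrium $\sigma$ is selected in the resulting subgame}. Since the SPE condition (2) must hold for the specific $\sigma$ chosen, and condition (1) forces $\sigma(\s')$ to be a client equilibrium for \emph{every} $\s'$, I would design the instance so that the set of client equilibria after any deviation is constrained enough that a facility's attracted load strictly exceeds its current load in $(\s,\sigma)$.

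The steps, in order, are as follows. First, I would fix the instance: specify $V$, the weights $w$, the directed edges $E$ (hence the neighborhoods $N(v)$), and $k=2$, possibly with placement restrictions $U$ if that simplifies the case analysis. Second, I would enumerate the essentially distinct facility placement profiles $\s$ up to the symmetries of the instance — with $k=2$ and a small graph this is a short finite list. Third, for each such $\s$, I would describe the set of client equilibria $\Sigma_\s$ using \Cref{obs:excluded-load} (a client only patronizes facilities of minimal $v$-excluded load in her range), compute the possible facility-load vectors, and identify a facility $f$ together with a deviation target $s'_f$; then I would show that in the post-deviation profile $\s' = (s'_f,\s_{-f})$, \emph{every} client equilibrium $\sigma(\s')$ gives $\ell_f(\s',\sigma) > \ell_f(\s,\sigma)$, violating condition~(2). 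Iterating over all cases yields that no $(\s,\sigma)$ satisfies both SPE conditions. For the stronger, quantitative statement about $\phi$-approximate SPE (referenced as a restatable result), the same skeleton applies but with a second instance tuned so that every deviation improves a facility's load by a factor arbitrarily close to $\phi$; I would only sketch that the golden ratio arises as the fixed point of the natural recurrence relating the ``stay'' load to the ``deviate'' load in the critical configuration, and defer details to the \suppmaterial{}.

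The main obstacle is handling the multiplicity of client equilibria: unlike the splittable case, after a facility deviates there can be several qualitatively different $\sigma(\s')$, and the adversary (the equilibrium selector trying to sustain a SPE) gets to pick the one most favorable to stability. So the core of the argument is a \emph{robustness} claim — that the deviating facility is strictly better off under \emph{all} client equilibria of the new subgame, not merely some. To make this tractable I would keep the graph sparse so that most clients have only one or two reachable facilities, which pins down their behavior up to a single tie-breaking parameter, and then verify the strict inequality over the (one-dimensional, compact) family of remaining equilibria. A secondary, more bookkeeping-style obstacle is making sure the chosen rational weights simultaneously (i) force the right client-equilibrium structure in the base configuration and (ii) leave a genuinely profitable deviation; this is exactly where the figures in the paper do the heavy lifting, and I would present the instance first, then let the reader verify the short table of cases.
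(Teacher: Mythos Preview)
Your proposal is correct and follows essentially the same route as the paper: a tiny two-facility instance (in the paper, three vertices $w_1,w_2,w_3$ with weights $3,2,1$ and edges $w_1\!\to\! w_2$, $w_1\!\to\! w_3$), an exhaustive table of client equilibria per FPP via \Cref{obs:excluded-load}, and then a cycle of strictly improving facility deviations among the off-diagonal placements together with an infeasibility check on the diagonal ones. The one simplification the paper exploits, which you should make explicit when you execute the plan, is that every improving deviation from an off-diagonal profile lands on another off-diagonal profile where the client equilibrium is \emph{unique}; this kills the ``adversarial equilibrium selector'' issue you flag, so the robustness-over-a-one-parameter-family argument is only needed on the three diagonal profiles, where the total-load constraint ($a+b=3$, etc.) immediately rules out stability.
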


\begin{restatable}[Approximate Weighted SPE Lower Bound]{theorem}{noapproxspe}
\label{thm:no-approx-spe}
Weighted instances of the atomic \flg{} might not admit $(\phi-\epsilon)$-approximate SPE for any $\epsilon>0$.
\end{restatable}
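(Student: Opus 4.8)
The plan is to exhibit one fixed weighted instance of the atomic \flg{} --- the one sketched on the right of \Cref{fig:no-spe} --- and to show by an exhaustive case analysis that for \emph{every} facility placement profile $\s$ and every \fullEq{} $\sigma$, some facility has a relocation that multiplies its load by at least $\phi$. Since $\phi > \phi-\epsilon$ for every $\epsilon>0$, such a state violates condition (2) of the $(\phi-\epsilon)$-approximate SPE definition, so no $(\phi-\epsilon)$-approximate SPE can exist. The instance uses only a constant number of facilities and vertices, with client weights tuned so that the defining identity $\phi^2=\phi+1$ (equivalently $\phi-1 = 1/\phi$) makes the critical load ratios exactly tight: wherever a facility sits it collects a load of some value $c$, while a neighboring vertex offers a private ``anchor'' client plus a guaranteed share of contested clients totalling $\phi c$, and the vertices are arranged in a short cycle so that no placement escapes this pressure.

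\textbf{Step 1 (enumerate placements).} Using the symmetry of the gadget, I would reduce the set of facility placement profiles $\s$ to a constant list of representatives. For each representative I would describe the set of \oneEqa{} at $\s$ via \Cref{obs:excluded-load}: a client randomizes only over facilities of minimal $v$-excluded load in her range. In the gadget this pins down the facility load vector up to a bounded number of possibilities --- the degree-one/anchor clients are forced, and the contested weight may be split in any proportion among tied facilities, which only strengthens the argument.

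\textbf{Step 2 (discard trivially unstable states).} If every \oneEq{} at $\s$ leaves some (possibly equilibrium-dependent) facility with load $0$ that could obtain positive load after a relocation, then no outcome $(\s,\sigma)$ with $\sigma(\s)$ a \oneEq{} extends to an $\alpha$-approximate SPE for any finite $\alpha$; such $\s$ are dismissed. Hence we may assume all facilities have positive load in the remaining states.

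\textbf{Step 3 (the profitable deviation) and the main obstacle.} For each surviving $(\s,\sigma)$ I would name a facility $f$ and a vertex $s'_f \in U(f)$ with $\ell_f\big((s'_f,\s_{-f}),\sigma\big)\ge \phi\cdot\ell_f(\s,\sigma)$; the worst representative gives exactly $\phi$, certified by the golden-ratio identity, and all others give strictly more. The delicate point --- and the technical heart of the proof --- is that $\sigma\big((s'_f,\s_{-f})\big)$ is again a \oneEq{}, and the \fullEq{} $\sigma$ may adversarially route contested clients \emph{away} from $f$ at the target profile, shrinking $f$'s gain. The gadget must therefore be engineered, via private degree-one clients (and possibly the location restrictions $U$), so that \emph{every} \oneEq{} at the target profile still routes load $\ge \phi\cdot\ell_f(\s,\sigma)$ to $f$; verifying this simultaneously for all placement representatives is the bulk of the work, and the detailed construction and case check are deferred to the \suppmaterial{}.
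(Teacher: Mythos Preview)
Your high-level plan---use the right-hand instance of \Cref{fig:no-spe} and do an exhaustive case analysis over placements---is exactly the paper's approach. But two concrete things in your write-up would not survive the case check.

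First, you claim one \emph{fixed} instance and an improvement factor of at least $\phi$ in every case, with ``the worst representative giv[ing] exactly $\phi$''. That is not what this gadget delivers. The weight of $v_2$ is $2-\tfrac{2}{\phi}-\epsilon$, so the instance is parameterized by $\epsilon$; and at the profile $(v_1,v_5)$ the best deviation is $g\to v_2$, giving load $\rho(v_2)=2-\epsilon$ against a current load $\rho(v_5)=\tfrac{2}{\phi}$, i.e.\ an improvement of only $\phi(1-\tfrac{\epsilon}{2})<\phi$. This is still $>\phi-\epsilon$, which is all the theorem needs, but your ``$\ge\phi$ everywhere'' plan fails here. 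If you try to repair this by setting $\epsilon=0$, you lose elsewhere: then $\rho(v_1)=\rho(v_2)=2$, the contested client $v_1$ is indifferent at $(v_1,v_2)$, and the adversary can pick a \oneEq{} splitting $v_1$ evenly so that both facilities get load $2-\tfrac{1}{\phi}$---from which no $\phi$-factor improvement exists. The $\epsilon$ is there precisely to break that tie, at the cost of one sub-$\phi$ case.

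Second, you flag the adversarial choice of $\sigma$ at the \emph{target} profile as ``the technical heart'' and the ``main obstacle''. In the paper's construction this obstacle does not exist: whenever the two facilities sit on distinct vertices, at most one client lies in both attraction ranges, and because all reaches $\rho(v)$ are pairwise distinct (this is exactly what the $\epsilon$ buys), that client has a strict preference by \Cref{obs:excluded-load}. Hence the \oneEq{} is \emph{unique} for every non-colocated profile, and in particular for every target profile of a deviation. For colocated profiles $(v,v)$ there is genuine non-uniqueness, but it only affects the \emph{starting} loads (one facility has load at most $\rho(v)/2$), not the target, so no robustness argument is needed. You should lead with this uniqueness observation rather than defer a nonexistent difficulty.
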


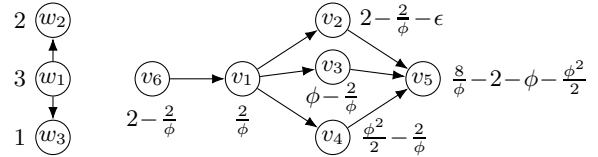
\begin{figure}
    \centering
    \begin{tikzpicture}
        \medmuskip=1mu
        \node (1) [vert,label=below:{$\frac{2}{\phi}$}]{$v_1$};
        \node (3) [vert,label={[label distance=-0.15cm]below:{$\phi-\frac{2}{\phi}$}}] at (1*\trlength, 0.15*\trheight) {$v_3$};
        \node (2) [vert,label={[inner ysep=0pt]right:{$2-\frac{2}{\phi}-\epsilon$}}] at (1*\trlength, 0.75*\trheight) {$v_2$};
        \node (4) [vert,label={[inner ysep=0pt]2:{$\frac{\phi^2}{2} - \frac{2}{\phi}$}}] at (1*\trlength, -0.75*\trheight) {$v_4$};
        \node (5) [vert,label=right:{$\frac{8}{\phi} - 2 - \phi -\frac{\phi^2}{2}$}] at (2*\trlength, 0) {$v_5$};
        \node (6) [vert,label=below:{$2 - \frac{2}{\phi}$}, left of=1]{$v_6$};
        \draw (1) -- (2);
        \draw (1) -- (3);
        \draw (1) -- (4);
        \draw (2) -- (5);
        \draw (3) -- (5);
        \draw (4) -- (5);
        \draw (6) -- (1);
        
        \node (1a) [vert,label=left:$3$] at (-2.1*\trlength, 0) {$w_1$};
        \node (2a) [vert,label=left:$2$] at (-2.1*\trlength, 0.75*\trheight) {$w_2$};
        \node (3a) [vert,label=left:$1$] at (-2.1*\trlength, -0.75*\trheight) {$w_3$};
        \draw (1a) -- (2a);
        \draw (1a) -- (3a);
    \end{tikzpicture}
    \caption{For two facility agents, an instance without SPE (left) and an instance without $(\phi-\epsilon)$-approximate SPE (right).}
    \label{fig:no-spe}
\end{figure}

It is open if this bound is tight.
Note that a $k$-approximate SPE always exists:
Place each facility on the single vertex with the highest total client weight in her attraction range, and let the clients use equal probabilities for each facility.

\begin{observation}[Approximate SPE Upper Bound]
\label{obs:approx-spe-upper}
Each instance of the \flg{} admits a $k$-approximate SPE.
\end{observation}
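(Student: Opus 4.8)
The plan is to exhibit one specific outcome $(\s,\sigma)$ and verify the two conditions of a $k$-approximate SPE directly.

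\textbf{Placement.} For each facility $f$, put it on a vertex $s_f\in U(f)$ that maximizes the total client weight it can attract, and write $W_f:=w(A_\s(f))=\max_{y\in U(f)} w\big(A_{(y,\s_{-f})}(f)\big)$; this is well defined because the attraction range of $f$ depends only on $f$'s own location. The point of this choice is that \emph{any} relocation of $f$ is weak: for every $y\in U(f)$, since a facility's load can never exceed the weight in its attraction range,
\[
 \ell_f\big((y,\s_{-f}),\sigma\big)\ \le\ w\big(A_{(y,\s_{-f})}(f)\big)\ \le\ W_f .
\]

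\textbf{Client profile.} Let $\sigma$ be a \fullProfile{} that is a client equilibrium at every FPP (pure client equilibria exist for each FPP, as recalled above), chosen so that at the particular profile $\s$ every client $v$ with $N_\s(v)\neq\varnothing$ mixes uniformly over $N_\s(v)$. Because every client can reach at most $k$ facilities, this makes each facility's realized load at $\s$ at least a $1/k$ fraction of its attraction weight:
\[
 \ell_f(\s,\sigma)\ =\ \sum_{v\,:\,f\in N_\s(v)} \frac{w(v)}{|N_\s(v)|}\ \ge\ \frac1k\sum_{v\in A_\s(f)} w(v)\ =\ \frac{W_f}{k}.
\]
Combining the two displays, $\ell_f\big((y,\s_{-f}),\sigma\big)\le W_f\le k\cdot\ell_f(\s,\sigma)$ for every facility $f$ and every $y\in U(f)$, which is exactly the relaxation of condition~(2) for $\alpha=k$; condition~(1) holds by the choice of $\sigma$. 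Hence $(\s,\sigma)$ is a $k$-approximate SPE. (If $W_f=0$ both sides are $0$, so the inequality still holds.)

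\textbf{Main obstacle.} The subtlety hiding in ``mixes uniformly over $N_\s(v)$'' is that a uniform client profile need not be a client equilibrium — a client reaching two facilities can see different $v$-excluded loads under uniform mixing. So the real work is to show that at the max-attraction profile $\s$ one can pick a client equilibrium that still gives every facility load at least $W_f/k$ (the uniform profile merely witnesses a \emph{profile} with this property). An alternative is to avoid the $1/k$-fraction bound and, after breaking ties carefully in the choice of $s_f$, bound each deviation by the \emph{per-vertex} attraction weight $w\big(A_{(y,\s_{-f})}(f)\big)$ for $y\neq s_f$ rather than by the global maximum $W_f$. I expect reconciling this ``fair share'' property with being an equilibrium (or handling near-ties in the per-vertex bound) to be the only nontrivial step, the rest being the two elementary estimates above.
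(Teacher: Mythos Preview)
Your construction is precisely the paper's: its entire justification is the single sentence ``Place each facility on the single vertex with the highest total client weight in her attraction range, and let the clients use equal probabilities for each facility.'' You have correctly put your finger on the one point the paper elides, namely whether the uniform profile at $\s$ is actually a client equilibrium.

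That concern disappears in the unrestricted setting. There all $k$ facilities can be co-located at a common global maximizer $v^{*}$, every covered client then has $N_\s(v)=F$, and by symmetry all $v$-excluded loads coincide; hence uniform mixing \emph{is} a client equilibrium and your two displayed inequalities finish the argument cleanly.

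For restricted instances, however, your worry is not a mere technicality---the max-attraction placement can fail outright, and not just because uniform mixing is infeasible. Take $k=2$, vertices $u,v,w$ with weights $100,1,50$, the sole edge $u\to v$, and $U(f)=\{u\}$, $U(g)=\{v,w\}$. The max-attraction vertex for $g$ is $v$ (attraction $101$), but at $\s=(u,v)$ the \emph{unique} client equilibrium gives $\ell_g=1$, while the deviation $g\to w$ forces $\ell_g=50$---a factor $50\not\le k=2$. No choice of client equilibrium at either profile rescues this, so neither your proposed fix (an equilibrium with $\ell_f\ge W_f/k$) nor the per-vertex alternative can work here; the placement itself is wrong. (The instance does admit an exact SPE at $\s=(u,w)$, so the observation survives---only the construction breaks.) The paper's one-liner does not handle this case either, so the gap you flag is genuine and shared with the paper rather than introduced by your write-up.
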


We can even show that deciding whether a $(\phi-\epsilon)$-approximate SPE exists is NP-complete via a reduction from SAT.
Membership in NP is given by the fact that a FPP and \oneEqa{} for only the polynomially many neighboring FPPs are sufficient to verify the existence of a SPE.
\ifdefined\arxiv\else
We give detailed proof in the \suppmaterial{}.
\fi

\begin{theorem}[Complexity of Approximate Weighted SPE]
\label{thm:npc}
    Let $\alpha \in [1, \phi)$ denote some multiplicative approximation ratio.
    It is NP-complete to decide whether a weighted instance of the atomic \flg{} admits an $\alpha$-approximate SPE.
\end{theorem}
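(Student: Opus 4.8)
\textbf{Membership in NP} is the easy part, and essentially stated above: a certificate consists of one FPP $\s$ together with a pure \oneEq{} for $\s$ and for each of the at most $k|V|$ FPPs obtained from $\s$ by a single facility relocation. Each guessed \oneProfile{} is checkable to be a \oneEq{} in polynomial time via \Cref{obs:excluded-load}, the $\alpha$-relaxed condition~(2) is a polynomial number of load comparisons, and by \citet{fotakis-congestion} the guess extends to a full client equilibrium on all remaining FPPs, which never enter the verification. So the real work is the NP-hardness reduction, and the plan is to reduce from SAT, using the fixed instance $I^\star$ behind \Cref{thm:no-approx-spe} (which has no $\alpha$-approximate SPE for any $\alpha<\phi$) as a ``punishment'' building block.

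Given a CNF formula $\Phi$ over variables $x_1,\dots,x_n$ with clauses $C_1,\dots,C_m$, I would assemble an instance $(H_\Phi,U_\Phi,k_\Phi)$ from three gadget types. A \emph{variable gadget} for $x_i$ contributes one facility agent $f_i$ with $U(f_i)=\{t_i,\bar t_i\}$, where $f_i$ at $t_i$ (resp.\ $\bar t_i$) encodes $x_i=\mathrm{true}$ (resp.\ false); the two vertices get a symmetric private neighbourhood, so $f_i$ is locally indifferent between them, and $U(f_i)$ forbids any other move, so every FPP induces a Boolean assignment. A \emph{clause gadget} for $C_j$ is a scaled, privately-facilitated copy of $I^\star$ hooked to the literal vertices of $C_j$ (vertex $t_i$ adjacent to the gadget iff $x_i$ occurs positively in $C_j$, $\bar t_i$ iff negatively), where the hook is built so that the gadget admits an $\alpha$-approximate SPE \emph{iff} at least one literal vertex of $C_j$ is currently occupied by a variable facility: informally, a facility on a satisfying literal absorbs the one client of the $I^\star$-copy that drives its instability. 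Finally, \emph{balancing gadgets} add dummy, always-satisfied clauses so that each literal carries the same total clause weight, keeping every $f_i$ indifferent between $t_i$ and $\bar t_i$ even after accounting for clause gadgets. The construction has size polynomial in $|\Phi|$, and all weights and the number of private facilities per clause gadget may depend on the fixed $\alpha<\phi$.

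For correctness, if $\Phi$ is satisfiable I would fix a satisfying assignment, place each $f_i$ on the corresponding literal vertex, place the private facilities of every clause gadget in the $\alpha$-SPE configuration that exists because the clause is satisfied, and pick a full client profile that is a \oneEq{} at this reference FPP and at every single-deviation neighbour, choosing in each neighbour the \oneEq{} least favourable to the deviating facility; the balancing gadgets plus the $\alpha$-slack then rule out every improving deviation, so this is an $\alpha$-approximate SPE. Conversely, in any $\alpha$-approximate SPE each $f_i$ sits at $t_i$ or $\bar t_i$, defining an assignment; if some $C_j$ were unsatisfied, its clause gadget would have no occupied literal vertex, hence be an $I^\star$-copy with no $\alpha$-approximate SPE, so one of its private facilities could improve by more than a factor $\alpha$, contradicting the SPE property. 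Thus every $\alpha$-approximate SPE encodes a satisfying assignment, and the reduction is complete.

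\textbf{Main obstacle.} The crux is the clause gadget: one must perturb $I^\star$ so that a single externally placed facility on a literal vertex flips it from ``no $\alpha$-approximate SPE'' to ``has one'', while simultaneously ensuring (i) that external (variable) facility has no incentive to leave, (ii) robustness against the exponentially many \oneEqa{}, since the profile-chooser may exploit any of them at any FPP, and (iii) the perturbation does not itself create an $\alpha$-approximate SPE when the clause is unsatisfied. Making all three hold at once, with the threshold pinned exactly at $\phi$, is the delicate part; the balancing gadgets and the freedom to select worst-case \oneEqa{} at deviation FPPs are the levers I would use to control (i) and (ii).
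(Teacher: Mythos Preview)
Your NP-membership sketch is essentially right (the restriction to \emph{pure} client equilibria in the certificate is unjustified, but this is minor). The hardness reduction, however, has a genuine gap that you yourself flag as the ``main obstacle'' and do not resolve: you attach one copy of the no-SPE instance $I^\star$ to every clause and require that a facility on a satisfying literal vertex ``defuse'' it, but you never construct such a gadget. The three requirements you list pull against each other, and the instability of $I^\star$ is not driven by a single removable client---it is a best-response cycle running over all six vertices with every improvement ratio pinned at $\phi$ (see the proof of \Cref{thm:no-approx-spe})---so there is no evident perturbation that an external facility could neutralise while still preserving your condition~(iii). As written this is a template, not a proof.

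The paper's reduction avoids this difficulty with a different architecture: it uses a \emph{single} copy of $I^\star$ as a static threat, never defused. Two dedicated facilities $g,h$ are confined to $I^\star$ plus one escape vertex $v_8$; the only way to avoid the $I^\star$ instability is for $h$ to escape to $v_8$, and a two-vertex bridge gadget then forces all $m$ SAT facilities off a tempting vertex $v_7$ (of weight $\alpha$) and into a separate component $G_1$. Satisfiability is encoded purely by \emph{coverage} in $G_1$: each SAT facility picks one literal vertex, and together they cover every clause vertex iff the formula is satisfiable. If so, a large pool of balancing clients gives each SAT facility load strictly above $1$, so the move to $v_7$ is not a factor-$\alpha$ improvement; if not, at least one client is uncovered, some SAT facility has load at most $1$, and it $\alpha$-improves by jumping to $v_7$. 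Thus $I^\star$ is used only to force the configuration, and the SAT encoding is a clean covering argument---precisely the mechanism your per-clause design could not supply.
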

\ifdefined\arxiv
\begin{proof}
    Let $\alpha \in (1, \phi)$ denote some approximation ratio.
    We reduce from SAT.
    Let an instance of SAT consisting of a set of $t$ clauses $C$ and a set of $m$ binary variables $\mathbf{x} = \{x_1, \ldots, x_m\}$, where we assume $t \geq 4$ for simplicity.
    We construct an instance $(G_1 \cup G_2 \cup G_3, U, k)$ of \flg{} which depends on the instance of SAT.
    We create $G_1 = (V_1, E_1)$ with $V_1 = V^\text{yes} \cup V^\text{no} \cup C \cup B$ with $|B| = (m-1)t$, $V^\text{yes} = \{y_1, \ldots, y_m\}$ and $V^\text{no} = \{n_1, \ldots, n_m\}$.
    We set all weights for $v \in V_1$ to $w(v) = \frac{m}{m(t+2) - 1}$.
    $E_1$ contains the following edges:
    \begin{align*}
        (n_i, y_i), (y_i, n_i) & \text{ for } i = 1, \ldots, m\text,\\
        (b, y) & \text{ for each } b, y \in B \times V^\text{yes,}\\
        (b, n) & \text{ for each } b, n \in B \times V^\text{no,}\\
        (c, y_i) & \text{ for each literal } x_i \in c, c \in \mathbf{x}\text{ and}\\
        (c, n_i) & \text{ for each literal } \neg x_i \in c, c \in \mathbf{x}\text.
    \end{align*}
    For $G_2$ we use the graph on the right in \Cref{fig:no-spe}, which does not admit any $(\phi-\epsilon)$-approximate SPE for two facility agents (\Cref{thm:no-approx-spe}) and choose a sufficiently small $\epsilon$.
    For $G_3$ we use the graph in \Cref{fig:component3}.
    
    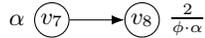
\begin{figure}[h]
        \centering
        \begin{tikzpicture}
            \node (7) [vert,label=left:{$\alpha$}]{$v_7$};
            \node (8) [vert,label=right:{$\frac{2}{\phi \cdot \alpha}$}, right of=7]{$v_8$};
            \draw (7) -- (8);
        \end{tikzpicture}
        \caption{Component $G_3$ of the weighted instance reduced from SAT in \Cref{thm:npc}.}
        \label{fig:component3}
    \end{figure}

We use the set of facilities $F = Q \cup \{g, h\}$ with the set $Q$ consisting of $m$ facilities and with the following restrictions:
\begin{align*}
    \forall q \in Q : U(q) &= V_1 \cup \{v_7\}\text,\\
    U(g) &= \{v_1, v_2, v_3, v_4, v_5, v_6\}\text{ and}\\
    U(h) &= \{v_1, v_2, v_3, v_4, v_5, v_6, v_8\}\text.
\end{align*}
We now show that this instance admits an $\alpha$-approximate SPE if and only if there is a satisfying assignment to the variables of $\mathbf{x}$.
Assume that there exists a satisfying assignment to the variables of $\mathbf{x}$ and let $\mathbf{z}$ denote one such assignment.
Then we set the facility placement profile $\s$ such that $s_g = v_1$, $s_h=v_8$ and for each $z_i \in \mathbf{z}$ we assign a facility $q \in Q$ to
\[
    s_q = \begin{cases}
        y_i & \text{ if } z_i \text{ is true,} \\
        n_i & \text{ if } z_i \text{ is false}.
    \end{cases}
\]
We construct a pure \oneProfile{} $\sigma(\s)$ such that all facilities in $Q$ receive equal load.
This is possible because the number of clients in $V_1$ is completely covered by facilities in $Q$, is divisible by $|Q| = m$ and we chose the size of $B$ which is accessible for all facilities placed in $V_1$ sufficiently large to equalize skewed distributions of the clients in $C$.
We also set $\sigma$ such that a facility $q \in Q$ which individually deviates to a vertex in $V_1$ receives the same utility as in $(\s, \sigma)$, which we can control through the assignment of the clients in $B$ again.

To prove that $(\s, \sigma)$ is an $\alpha$-approximate SPE, it suffices to show that none of the facility agents can improve by a factor $\alpha$ by moving away from $\s$ in the facility game induced by $\sigma$.
First note that $g$ and $h$ cannot improve, since the maximal available utility of every vertex they can move to is smaller than their current load.
Next, note that the sum of the loads of facilities the facilities in $Q$ is $m(t+2) \cdot \frac{m}{m(t+2)-1} > m$ and, thus, the loads on facilities in $Q$ are strictly larger than $1$.
This means that a facility agent $q \in Q$ moving to $v_7$ improves by a factor smaller than $\alpha$.
Thus, no facility can improve by a factor of at least $\alpha$, so we conclude that $(\s, \sigma)$ is an $\alpha$-approximate SPE.

Now assume that there exists no satisfying assignment for the instance of SAT, and assume by contradiction that an $\alpha$-approximate SPE $(\s, \sigma)$ exits.
\Cref{thm:no-approx-spe} shows that no $\alpha$-approximate facility equilibrium exists with exactly two facilities located on the vertices of $G_2$.
Since $g$ and $h$ are the only facilities able to pick any vertex on $G_2$ as their location, and $g$ can \emph{only} pick these vertices, $h$ must be on $s_h = v_8$.

Next, note that $s_g \neq v_5$, as this would allow $g$ to improve by a factor $\phi > \alpha$ by moving to $v_1$.
It follows that $h$ can attain a load of $\frac{2}{\phi}$ by moving to $v_{5}$.
By assumption, this move does not increase the load on $h$ by a factor $\alpha$, i.e., $\ell_h(\s, \sigma) > \frac{2}{\phi \cdot \alpha}$.
Thus, client $v_7$ puts weight on $f_{h}$, which implies that no facility is located on $v_7$.
We conclude that facilities in $Q$ are all located on $V_1$.

Since there is no satisfying assignment for the instance of SAT, a set of $m$ facilities cannot cover all clients in $V_1$ for any possible \oneProfile{}.
The sum of the loads on facilities in $Q$ is therefore at most $(m(t+2) - 1) \cdot \frac{m}{m(t+2) -1} = m$.
Thus, there is some facility agent $q \in Q$ with load at most $1$, who can improve her load by a factor $\alpha$ by moving to $v_7$.
Hence, $\s$ is not an $\alpha$-approximate facility equilibrium; a contradiction.
We conclude that the problem of deciding whether an instance of \flg{} admits an $\alpha$-approximate SPE is NP-hard.

Even though the size of a SPE is exponential, a partial approximate SPE $(\s, \sigma^\text{partial})$ of polynomial size is sufficient for verification, where $\sigma^\text{partial}$ contains only the client equilibria for $\s$ and the facility placement profiles with exactly one facility deviation from $\s$.
As established in \Cref{sec:client-eq}, all other facility placement profiles also have client equilibria.
Verifying that $(\s, \sigma^\text{partial})$ is in fact a partial approximate SPE may be done in polynomial time by testing all possible facility deviations and checking if \Cref{obs:excluded-load} holds.
\end{proof}
\fi

\section{Equilibrium Efficiency}
\label{sec:poa}
Note that the weighted participation rate captures both the social welfare of the clients and the facilities.
Note that for the clients the waiting time is not taken into account, only whether they are served at all.
% Expand in journal version
We prove that the PoA is exactly $2$.
This improves the upper bound of \citet{ijcai-21} that only holds for instances where each facility may choose any location.
\ifdefined\arxiv
In comparison to their Theorem 6, our lower bound is tight, which relies on the specific behavior of our atomic clients, however.
\else
Also, as another contrast, in the \suppmaterial{} we show a matching lower bound.
There, we also show how to transfer their lower bound of $2-\frac1k$ on the price of stability to our model.
\fi

\begin{theorem}[Price of Anarchy]
\label{thm:poa}
For the weighted atomic \flg{}, the price of anarchy regarding the utilitarian social welfare of the facility agents is $2$ for $k \geq 2$ facility agents.
\end{theorem}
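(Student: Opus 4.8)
The plan is to prove the matching upper and lower bounds separately.

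\textbf{Upper bound ($\mathrm{PoA}\le 2$).} Fix an instance $(H,U,k)$ that admits an SPE $(\s,\sigma)$, and let $\s^\ast$ be a welfare-optimal FPP with components $s^\ast_1,\dots,s^\ast_k$. The starting point is the identity $\sum_{f\in F}\ell_f(\s,\sigma)=w(\s)$, which holds because every client with nonempty shopping range contributes its full weight, spread over its mixed strategy. For each facility $j$, the deviation to $s^\ast_j$ is feasible since $s^\ast_j\in U(j)$, so SPE condition~(2) gives $\ell_j(\s,\sigma)\ge \ell_j((s^\ast_j,\s_{-j}),\sigma)$. I then lower-bound the right-hand side: any client $v$ with $s^\ast_j\in N(v)$ and $N_\s(v)=\varnothing$ is, after the deviation, left with $j$ as the only facility in its shopping range, and therefore patronizes $j$ with probability $1$ in \emph{every} client equilibrium; hence $\ell_j((s^\ast_j,\s_{-j}),\sigma)\ge w\bigl(\{v : s^\ast_j\in N(v),\ N_\s(v)=\varnothing\}\bigr)$. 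Summing over $j$ and letting $D$ be the set of clients covered by $\s^\ast$ but not by $\s$, every client of $D$ satisfies $s^\ast_j\in N(v)$ for some $j$, so the sum of these weights is at least $w(D)$; together with $\sum_j\ell_j(\s,\sigma)=w(\s)$ this gives $w(\s)\ge w(D)$. Finally $w(\s^\ast)=w(D)+w(B)$, where $B$ is the set of clients covered by both $\s$ and $\s^\ast$, and $w(B)\le w(\s)$, so $w(\s^\ast)\le 2w(\s)$. Crucially the argument is oblivious to which client equilibrium $\sigma$ prescribes after a deviation, so it holds in the restricted model as well, which is the improvement over the corresponding bound of \citet{ijcai-21}.

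\textbf{Lower bound ($\mathrm{PoA}\ge 2$).} For $k=2$ I will construct an instance on six vertices: a hub $c$ of weight $0$, a client $z$ of weight $2$ with edges to $c$ and to an auxiliary vertex $a$ of weight $0$, a client $u_a$ of weight $1$ with an edge only to $a$, and a symmetric pair $u_b,b$ on the other side, arranged so that $u_a$ and $u_b$ have no common out-neighbour. The candidate equilibrium places both facilities on $c$, and for the FPP $\s=(c,c)$ picks the client equilibrium in which $z$ splits evenly; then each facility has load $1$ and the participation rate is $w(\s)=2$, whereas the optimum places the facilities on $a$ and $b$, covers $z,u_a,u_b$, and has participation rate $4$. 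The point is that no facility can profitably deviate: moving onto $a$, $b$, $u_a$ or $u_b$ attracts only the weight-$1$ exclusive client there (load $1$), because in the post-deviation client equilibrium $z$ may be sent entirely to the facility remaining on $c$ -- a valid choice, since that facility then has a $z$-excluded load no larger than the other's -- and moving onto $z$ itself again lets $z$ go to the facility on $c$, for load $0$. With non-atomic clients $z$ would instead split so as to equalize the two loads, turning the move onto $a$ into a strict improvement; thus the construction genuinely exploits atomicity, matching the paper's remark. For general $k$ I append $k-2$ dummy facilities, each restricted via $U$ to its own isolated weight-$0$ vertex, which changes neither participation rate, so $\mathrm{PoA}=2$ for all $k\ge 2$.

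\textbf{Main obstacle.} The upper bound is a fairly routine ``deviate to the optimal location plus averaging'' argument once the forced-patronage observation is isolated. The genuinely delicate part is the lower bound: one must verify that the proposed state is truly an SPE, i.e.\ that for each of the polynomially many single-facility deviations there is a client equilibrium leaving the deviator no better off, and that these choices, together with arbitrary client equilibria for the remaining FPPs (which exist because each client subgame is a weighted singleton congestion game), assemble into a legitimate full client equilibrium. The subtle cases are the deviations onto $z$ and onto $a,b$, where one has to determine the induced client equilibrium precisely and check that assigning the hub weight $0$ is exactly what makes the relevant comparisons come out as non-strict ties rather than strict improvements.
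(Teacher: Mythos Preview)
Your upper-bound argument is essentially the paper's: you deviate facility $j$ to $s^\ast_j$, observe that every client uncovered by $\s$ but covered at $s^\ast_j$ is forced to patronize $j$ in any client equilibrium, and sum. The paper packages the same idea via ``$f$-clusters'' (maximal uncovered sets a facility could capture), but the logic and the inequality $w(\s^\ast)\le 2w(\s)$ are identical.

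Your lower-bound construction is correct but genuinely different from the paper's. The paper uses an \emph{unweighted, unrestricted} instance on $2k$ vertices: a complete directed core $c_1,\dots,c_k$ with a private pendant leaf $o_i$ attached to each $c_i$. Placing facility $i$ on $c_i$ and having each $c_i$ patronize the facility on its own vertex gives every facility load $1$ and participation rate $k$; placing them on $o_1,\dots,o_k$ covers all $2k$ clients. After any single deviation the ties in the $v$-excluded loads let the paper route the freed-up core client to some non-deviating facility, so the deviator never exceeds load $1$. Your six-vertex gadget with weights $\{0,1,2\}$ (and, for $k>2$, restricted dummy facilities on isolated weight-$0$ vertices) achieves the same ratio and also exploits the atomic tie-breaking, but it is weaker in scope: the paper's instance shows that $\mathrm{PoA}=2$ is attained already in the unweighted, unrestricted setting, which your construction does not. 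Both approaches hinge on the same phenomenon you identify---that an atomic client with tied or strictly ordered $v$-excluded loads need not equalize---so the conceptual content is the same; the paper's instance simply certifies a sharper statement.
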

\ifdefined\arxiv\begin{proof}\else\begin{proof}[Proofsketch]\fi
    We consider an instance $(H, U, k)$ of \flg{} with $k\geq 2$, and let $(\s, \sigma)$ denote some SPE.
    Let $V^\text{un}(\s)$ denote the set of uncovered clients with no facility in their shopping range for $\s$.
    We say a subset $V' \subseteq V^\text{un}(\s)$ is an \emph{$f$-cluster} if they can all be covered by facility $f$ at the same time, i.e., if there exists a vertex $v \in U(f)$ such that $V' \subseteq N(v)$.
    Since $\s$ is a facility equilibrium, the maximum weight of an $f$-cluster is $\ell_f(\s, \sigma)$, as otherwise, $f$ could improve by at least the complete weight of the $f$-cluster by moving to $v$.
    
    For each $f \in F$, let $V^f$ denote an $f$-cluster of maximal weight.
    Then the total weight of clients in $V^\text{un}(\s)$ covered by an alternative FPP is at most $\sum_{f \in F} w(V^f)$.
    Any such alternative FPP may still cover all previously covered clients, so we bound $W(\text{OPT})$ as follows:
    $W(\text{OPT})\leq$
    
    \noindent\resizebox{\linewidth}{!}{\( \displaystyle
          \sum_{f \in F} \left(\ell_f(\s, \sigma) + w(V^f)\right) \leq \sum_{f \in F} \left(\ell_f(\s, \sigma) + \ell_f(\s, \sigma)\right) = 2 W(\s)\text.
    \)}
    Therefore, the price of anarchy is at most $2$.
    \ifdefined\arxiv
    
    Now, we show that this bound is tight even for unweighted and unrestricted instances.
    To this end, let $(H, k)$ be an unrestricted instance of unweighted atomic \flg{} with $k \geq 2$
    We set the host graph to be $H = (V^\text{core} \cup V^\text{out}, A)$ with $V^\text{core} = \{c_1, \ldots, c_k\}$, $V^\text{out} = \{o_1, \ldots, o_k\}$ and
    \[
        A = \{(c_i, c_j)\mid i,j \leq k, i \neq j\} \cup \{c_i, o_i\mid i \leq k\}\text.
    \]
    The host graph for $k=3$ is shown in \Cref{fig:example-efficiency}.
    
    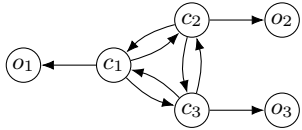
\begin{figure}[h]
        \centering
        \begin{tikzpicture}
            \node[vert] (c1) at (0,0) {$c_1$};
            \node[vert] (c2) at (\trheight, 0.5*\trlength) {$c_2$};
            \node[vert] (c3) at (\trheight, -0.5*\trlength) {$c_3$};
            \node[vert] (o1) [left of=c1]{$o_1$};
            \node[vert] (o2) [right of=c2]{$o_2$};
            \node[vert] (o3) [right of=c3]{$o_3$};
    
            \draw (c1) to[bend right=15] (c2);
            \draw (c1) to[bend right=15] (c3) ;
            \draw (c2) to[bend right=15] (c1);
            \draw (c2) to[bend right=15] (c3);
            \draw (c3) to[bend right=15] (c1);
            \draw (c3) to[bend right=15] (c2);
            \draw (c1) to (o1);
            \draw (c2) to (o2) ;
            \draw (c3) to (o3);
        \end{tikzpicture}
        \caption{The host graph for the lower bound in \Cref{thm:poa} for $k=3$ facility agents.} \label{fig:example-efficiency}
    \end{figure}
    
    Consider facility placement profiles $\s_\text{opt} := (o_1, o_2, \ldots, o_k)$ and $\s = (v_1, v_2, \ldots, v_k)$.
    Note that $\s_\text{opt}$ covers all clients, and is thus a social optimum profile, while $\s$ covers exactly half of the clients.
    To prove the tightness of the bound, it suffices to find a \fullEq{} $\sigma$ such that $(\s, \sigma)$ is a SPE.

    For $\sigma(\s)$ we let each client put her weight on the facility on her own vertex.
    In a facility placement profile $\s'$ where any facility $f$ deviates to any vertex $c \in V^\text{core}$, we set $\sigma(\s') = \sigma(\s)$, so that all facilities receive the same weight from the exact same client.
    For a deviation to a vertex $o \in V^\text{out}$, we set $\sigma(\s')_{o, f} = 1$ and let the client on the previous location of $f$ put her weight on an arbitrary facility other than $f$.
    We keep all other weights identical to $\sigma(\s)$ and pick arbitrary client equilibria for all yet undefined FPPs.
    In both cases $f$ does not increase her load, so profitable deviations from $\s$ are impossible.
    \fi%
\end{proof}

\ifdefined\arxiv
Additionally, \citet{ijcai-21} give a lower bound on the price of stability for a very similar model to ours.
See \Cref{sec:adaptation} for how to adapt their proof for our model.

\begin{restatable}[Price of Stability]{theorem}{pos}
    For the weighted atomic \flg{}, the price of stability regarding the utilitarian social welfare of the facility agents is at least $2-\frac1k$.
\end{restatable}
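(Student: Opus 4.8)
The plan is to transfer the price-of-stability lower bound of \citet{ijcai-21} to our atomic model. Their lower bound is witnessed by an instance $(H,U,k)$ in which the social optimum covers roughly $2k-1$ units of client weight --- one facility on a heavy ``hub'' location together with one facility on each of $k-1$ light ``private'' locations --- whereas in every subgame perfect equilibrium the facilities' competition for load forces all but a bounded number of them onto the hub, so that only about $k$ units get covered; recall that the sum of the facility loads equals exactly the weighted participation rate $w(\s)$. Choosing the hub weight to approach $k$ from above drives $w(\mathrm{OPT})/w(\mathrm{bestSPE})$ towards $(2k-1)/k = 2-\tfrac1k$, and since the price of stability is a supremum over instances this already yields $\mathrm{PoS}\ge 2-\tfrac1k$, even if no single instance attains the ratio exactly.

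Concretely I would proceed in three steps. First, identify the social optimum on $H$ --- one facility on the hub and the remaining facilities on as many private locations as they can reach --- which is the purely combinatorial part and is unchanged from \citet{ijcai-21}. Second, exhibit one SPE explicitly: take the clustered facility placement $\s$ in which all facilities sit on (or as near as the restrictions $U$ allow to) the hub, and specify a \fullProfile{} $\sigma$ by splitting the hub's client weight uniformly among the co-located facilities on $\s$ (so every facility has load strictly above one unit) and, on each one-deviation FPP $(s'_f,\s_{-f})$, picking --- from among the client equilibria, which always exist by \citet{fotakis-congestion} --- one under which the deviator's load is at most one unit; then no facility strictly gains, $(\s,\sigma)$ is an SPE, and its participation rate is $\approx k$. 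Third, argue that no facility placement with participation rate above $\approx k$ can underlie any SPE: in such a placement some facility occupies a private location, where by the connectivity of $H$ its load equals precisely the weight of the single client attached there, regardless of the client profile; a counting argument over the hub weight then exhibits a facility deviation that strictly raises that facility's load \emph{against every} client equilibrium of the deviated profile, so the placement cannot be part of an SPE. Together these bound $w(\mathrm{bestSPE})$ by $\approx k$.

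The crux --- and the reason this is not a verbatim copy --- is that atomic clients admit many (mixed) client equilibria per FPP, while the non-atomic client subgame of \citet{ijcai-21} has an essentially unique equilibrium. This matters on both sides. In the third step one must break the placement against the \emph{worst-case} off-path client equilibrium, which forces the destabilizing deviation to target a location whose clients are captive to the deviating facility (that facility being their only option), so that the improvement is robust to client behavior. In the second step one must conversely check that the clustered placement still admits a consistent family of supporting client equilibria, one per FPP. I expect the bulk of the work to be verifying that $H$ --- after, if needed, an infinitesimal perturbation of the weights to remove ties, which only perturbs the bound --- has exactly this robustness on the optimum-destroying deviations and this rigidity around the clustered placement, so that the facility-level analysis of \citet{ijcai-21} carries over.
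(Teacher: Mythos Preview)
Your high-level plan coincides with the paper's: adapt the instance of \citet{ijcai-21}, argue that the clustered placement (all facilities on the big star $v_k$) is the \emph{only} SPE, and pass to a limit to obtain the ratio $2-\tfrac1k$. The paper's own proof is terse --- it adds $x-1$ extra unit-weight leaves so that $v_k$'s star has $k(x{+}1)$ nodes in total, asserts that this forces all-at-$v_k$ to be the unique SPE, and sends $x\to\infty$.

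Your third step, however, misidentifies the robustness mechanism. You write that the destabilizing deviation must ``target a location whose clients are captive to the deviating facility,'' but in this construction the profitable move is \emph{from} a small star (load $x$) \emph{to} the already-occupied hub $v_k$, where clients are not captive at all. What makes that deviation succeed against every off-path client equilibrium is a load-balancing property particular to \emph{unit-weight} clients: whenever $m$ facilities share a location with $N$ unit clients, the equilibrium condition forces any two expected facility loads to differ by at most~$1$, so each facility receives at least $(N-m+1)/m$; with $N=k(x{+}1)$ and $m\le k$ this is $x+\tfrac1k>x$, and the small-star facility strictly improves regardless of which client equilibrium $\sigma$ prescribes at the deviated profile. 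This argument collapses if the hub's weight sits in a single heavy atomic client --- she can put probability~$0$ on the deviator in a perfectly valid client equilibrium --- which is exactly why the paper enlarges the hub by \emph{adding many unit-weight nodes} rather than by tuning a weight. Your proposed ``infinitesimal perturbation of the weights to remove ties'' would therefore not deliver step three; the modification is not about breaking ties but about populating the hub with enough indivisible clients to guarantee the deviator a share.
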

\fi

\section{Conclusion}

Even though two-stage facility location games with atomic clients may have qualitatively distinct client equilibria, we demonstrate that SPE exist for the case of unweighted clients, and that this no longer holds for non-uniform clients.
Our proof of existence via modified best response dynamics introduces several novel ideas that may be useful also beyond facility location.
For the weighted case, the conjecture that $\phi$-approximate SPE always exist remains an open problem.
%For this, we present a novel approach based on a modified best response dynamic that takes the behaviors of both clients and facilities into account. Our potential function argument crucially relies on a hierarchical classification of the clients and careful rounding in each step.
We believe our work serves as a starting point for further research into more complex competitive facility location models.

%with two or more stages.

\section*{Acknowledgements}
This work was supported by the Federal Ministry of Education and Research (BMBF) with a fellowship within the IFI program of the German Academic Exchange Service (DAAD).

\ifdefined\arxiv
    \printbibliography
\else
    \bibliographystyle{named}
    \bibliography{references}
\fi

\ifdefined\arxiv

\clearpage
\begin{center}
\noindent\textbf{\LARGE Appendix}
\end{center}
\appendix

\section{Transferring Results from the \flg{} by \texorpdfstring{\citet{ijcai-21}}{Krogmann, Lenzner, Molitor, et al. (2021)}}
\label{sec:adaptation}

\classsetalg*
\begin{proof}
We show that the algorithm to compute a MNS as defined by \citet{ijcai-21} may be used to compute an MNS for our definition as well.
To compute the minimum neighborhood set for an instance with the host graph $H$, the set of facilities $F$ and the facility placement profile $\s$ according to our definition with the given sets $F^*$ and $V^*$, we create an instance with the host graph $H'$ being the subgraph of $H$ induced by the vertices $V^*$ and use the set of facilities $F'=F^*$.
Then, we use Algorithm~2 of \citet{ijcai-21}.
Even though their definition of MNS does not demand that the MNS is the largest possible subset of $F^*$ with the given ratio, their algorithm produces this property anyway.
To see this, assume towards contradiction that there is a facility $f \notin R$ which is in $M=\MNS_\s(F^*, V^*)$.
Then the algorithm found an augmenting path in Line 15 and therefore, the facilities in $M$ can receive a total load higher than $w(A_\s(M) \cap V^*)$, which is a contradiction.

By \Cref{cor:class-set-unique} we get the unique class set through repeated computation of minimum neighborhood sets.
\end{proof}

\pos*
\begin{proof}
We reuse the construction for the graph $G$ by \citet{ijcai-21} and then add $x-1$ nodes to the star around $v_k$, such that in total the star has $k(x+1)$ nodes (including the center).
This ensures that all facilities choosing $v_k$ is the only SPE.
Then we get
\[
\text{PoS} \geq \frac{k(x+1)+(k-1)x}{k(x+1)} = \frac{2kx + k - x}{kx+k}
\]
with
\[
\lim_{x\to\infty}{\left(\frac{2kx + k - x}{kx+k}\right)}=\frac{2k-1}{k}=2-\frac1k\text.\hfill\qedhere
\]
\end{proof}

\section{SPE Counterexamples}
\nospe*
\begin{proof}
    Let $(H, 2)$ be an unrestricted instance of \flg{} with the host graph $H$ shown on the left in \Cref{fig:no-spe}.
    To show that this instance has no subgame perfect equilibrium, we first find the set of client equilibria for each induced client game.
    An exhaustive list of all client equilibria is given in \Cref{tab:equilibria}.
    \begin{table}[h]
        \centering
        \caption{All client equilibria for each facility placement profile of the instance on the left in \Cref{fig:no-spe}.} \label{tab:equilibria}
        \begin{tabular}{lll}\toprule
            $\s$ & possible client equilibria & facility loads \\ \midrule
            (1, 1) & $\forall \gamma \in [0,1]:$ & \\
                   & $\quad \big((\gamma,1-\gamma),(0,0),(0,0)\big)$& $(\gamma, 3-\gamma)$\\
            (1, 2) & \big((1,0), (0,1), (0,0)\big) & (3, 2)\\
            (1, 3) & \big((1,0), (0,0), (0,1)\big) & (3, 1)\\
            (2, 1) & \big((0,1), (1,0), (0,0)\big) & (2, 3)\\
            (2, 2) & \big((0,1), (1,0), (0,0)\big), & (2, 3)\\
            & \big((1,0), (0,1), (0,0)\big), & (3, 2)\\
            & \big((0.5,0.5), (0.5,0.5), (0,0)\big) & (2.5, 2.5) \\
            (2, 3) & \big((0,1), (1,0), (0,1)\big) & (2, 4)\\
            (3, 1) & \big((0,1), (0,0), (1,0)\big) & (1, 3)\\
            (3, 2) & \big((1,0), (0,1), (1,0)\big) & (4, 2)\\
            (3, 3) & \big((0,1), (0,0), (1,0)\big), & (1, 3)\\
            & \big((1,0), (0,0), (0,1)\big), & (3, 1)\\
            & \big((0.5,0.5), (0,0), (0.5,0.5)\big) & (2, 2) \\ \bottomrule
        \end{tabular}
    \end{table}
    We consider the facility game induced by some arbitrary $\sigma \in \Sigma$ for which we give the payoff matrix in \Cref{tab:payoff-matrix} in which the diagonal entries depend on $\sigma$.
    For each of the six off-diagonal strategy profiles, some agent can improve by moving to another off-diagonal strategy profile.
    
    \begin{table}[h]
    \centering
    \caption{The payoff matrix for the strategies of the facility game played on the instance on the left in \Cref{fig:no-spe}.} \label{tab:payoff-matrix}
    \begin{game}{3}{3}
          & $w_1$  & $w_2$ & $w_3$\\
    $w_1$ & $a_{1,1}, b_{1,1}$ & $3, 2$ & $3, 1$ \\
    $w_2$ & $2, 3$ & $a_{2,2}, b_{2,2}$ & $2, 4$ \\
    $w_3$ & $1, 3$ & $4, 2$ & $a_{3,3}, b_{3,3}$ \\
    \end{game}
    \end{table}
    
    Next, consider the diagonal strategy profiles:
    \begin{itemize}
        \item For $(w_1, w_1)$ to be stable, we require $a_{1,1} \geq 2$ and $ b_{1,1} \geq 2$ which contradicts $a_{1,1} + b_{1,1} = 3$.
        \item For $(w_2, w_2)$ to be stable, we require $a_{2,2} \geq 4$ and $b_{2,2} \geq 4$ which cannot be satisfied.
        \item For $(w_3, w_3)$ to be stable, we require: $a_{3,3} \geq 3$ and $b_{2,2} \geq 3$ which cannot be satisfied.
    \end{itemize}
    Thus, there exists an instance that does not admit a subgame perfect equilibrium.
\end{proof}

\noapproxspe*
\begin{proof}
    Let $(H, 2)$ be an unrestricted instance of \flg{} with the host graph $H$ shown on the right in \Cref{fig:no-spe}.
    We first establish that the \oneEq{} is unique if $s_f \neq s_g$, for the two facilities $f$ and $g$.
    For any such facility placement profile, the number of clients in the attraction range of both facilities is at most one.
    Thus, there is at most one client with a choice.
    We define the \emph{reach} $\rho(v)$ of a vertex $v$ to be the sum of the weights of all clients in the attraction range of a facility placed on $v$.
    Since the reach is distinct for every facility location, the unique \oneEq{} by \Cref{obs:excluded-load} is the \oneProfile{} where the client in the intersection (if there is one) exclusively considers the facility with smaller reach.
    \Cref{tab:vertex-reach} gives the reach of all vertices.
\begin{table}[h]
    \centering
    \caption{Reach (as defined in the proof of \Cref{thm:no-approx-spe}) of each vertex in the instance on the right in \Cref{fig:no-spe}.}
    \begin{tabular}{lcc}\toprule
        \textbf{vertex} & \textbf{exact reach} & \textbf{rounded reach} \\ \midrule
        $v_1$ & 2 & 2\\
        $v_2$ & $2 - \epsilon$ & $2 - \epsilon$ \\
        $v_3$ & $\phi$ & 1.618 \\
        $v_4$ & $\frac{\phi^2}{2}$ & 1.309 \\
        $v_5$ & $\frac{2}{\phi}$ & 1.236 \\
        $v_6$ & $2-\frac{2}{\phi}$ & 0.764 \\ \bottomrule
    \end{tabular}
    \label{tab:vertex-reach}
\end{table}

Let $\sigma$ denote some arbitrary\fullEq{}.
We show that for all facility placement profiles $\s$, one of the facility agents can improve their load by a factor strictly larger than $\phi - \epsilon$ by moving.
By symmetry, it suffices to show this for $\{(v_i, v_j) \in S\mid i \leq j\}$.
We first consider facility placement profiles $(s_f, s_g)$ with $s_f \neq s_g$:
\begin{itemize}
    \item For $\s \in \{(v_1, v_2), (v_1, v_3), (v_1, v_4), (v_2, v_3), (v_2, v_4),$ $(v_3, v_4)\}$, the load on $f$ is at most $2-\frac{2}{\phi} = \frac{2}{\phi^2}$.
    Facility $f$ can improve by a factor of at least $\phi$ by moving to $v_5$, since this results in a load of $\rho(v_5) = \frac{2}{\phi}$.
    \item For $\s \in \{(v_2, v_5), (v_3, v_5), (v_4,v_5)\}$, the load on facility $f$ is exactly $\frac{2}{\phi}$.
    Facility $f$ can improve by factor $\phi$ by moving to $v_1$, since this results in a load of $\rho(v_1) = 2$.
    \item For $\s = (v_1, v_5)$, the load on $g$ is $\rho(v_5) = \frac{2}{\phi}$.
    $g$ can improve by factor $\phi(1-\frac{\epsilon}{2}) > \phi - \epsilon$ by moving to $v_2$, since this results in a load of $\rho(v_2) = 2-\epsilon$.
    \item For $\s = (v_i, v_6)$, $i \in \{1, 2, 3, 4, 5\}$, the load on $g$ is $2-\frac{2}{\phi}$.
    Facility $g$ can improve by a factor of at least $\phi$ by moving to either $v_1$ or $v_5$ (depending on $s_f)$.
\end{itemize}

Next, we show that for all $v \in V$, the state $((v, v), \sigma)$ is not a $(\phi-\epsilon)$-approximate SPE.
When both facilities are located on the same vertex $v$, one of the facilities has a load of at most $\frac{\rho(v)}{2}$.
Thus, it suffices to show that for all $v \in V$, there is a vertex $u \in V$ such that the load of the moving facility $f$ is $\ell_f((u, v), \sigma) > (\phi - \epsilon) \frac{\rho(v)}{2}$.
\Cref{tab:no-symmetric-spe} lists the best response locations $u$ corresponding to each profile $\s$ with $s_f = s_g$.

\begin{table}[h]
    \centering
    \caption{Best facility response for each profile with both facilities on the same vertex in the instance on the right in \Cref{fig:no-spe}.}
    \begin{tabular}{cccc} \toprule
        vertex $v$ & $\frac{\rho(v)}{2}$ & best response $u$ & $\ell_f((u, v), \sigma)$ \\ \midrule
        $v_1$ & 1 & $v_2$ & $2-\epsilon$ \\
        $v_2$ & $1-\frac{\epsilon}{2}$ & $v_3$ & $\phi$ \\
        $v_3$ & $\frac{\phi}{2}$ & $v_4$ & $\frac{\phi^2}{2}$ \\
        $v_4$ & $\frac{\phi^2}{4}$ & $v_5$ & $\frac{2}{\phi} = 1.89 \cdot \frac{\phi^2}{4}$ \\
        $v_5$ & $\frac{1}{\phi}$ & $v_1$ & 2 \\
        $v_6$ & $1-\frac{1}{\phi}$ & $v_2$ & $2 - \epsilon$ \\ \bottomrule
    \end{tabular}
    \label{tab:no-symmetric-spe}
\end{table}
This concludes the proof.
\end{proof}

\fi

\end{document}